\newcounter{algsubstate}
\renewcommand{\thealgsubstate}{\alph{algsubstate}}
\newtheorem{lemma}{Lemma}
\newcommand{\paran}[1]{\left( #1 \right)}
\newcommand{\remoteGate}[0]{\textsc{RemoteGate}\xspace}
\begin{document}

\title{\textsc{RemoteGate}: Incentive-Compatible Remote Configuration of Security Gateways}

\author{Abhinav Aggarwal$^{*,\dagger}$, Mahdi Zamani$^*$, Mihai Christodorescu$^*$ \\
\textit{$^*$Visa Research, Palo Alto CA}\\
\textit{$^\dagger$University of New Mexico, Albuquerque, NM}
}
%\author{}
\date{}
\maketitle

\begin{abstract}
Imagine that a malicious hacker is trying to attack a server over the Internet and the server wants to block the attack packets as close to their point of origin as possible. However, the security gateway ahead of the source of attack is untrusted. How can the server block the attack packets through this gateway? In this paper, we introduce \remoteGate, a trustworthy mechanism for allowing any party (server) on the Internet to configure a security gateway owned by a second party, at a certain agreed upon reward that the former pays to the latter for its service. We take an interactive incentive-compatible approach, for the case when both the server and the gateway are rational, to devise a protocol that will allow the server to help the security gateway generate and deploy a policy rule that filters the attack packets before they reach the server. The server will reward the gateway only when the latter can successfully verify that it has generated and deployed the correct rule for the issue. This mechanism will enable an Internet-scale approach to improving security and privacy, backed by digital payment incentives.
\end{abstract}

%\tableofcontents

\section{Introduction} 
% \section{Introduction}
\label{sec:intro}

% overview of the section
% What are you trying to do? Articulate your objectives using absolutely no jargon.
Enterprise and home networks are nowadays connected to the Internet via security gateways, such as firewalls, intrusion detection systems, and data-leakage protection systems, designed to protect the nodes on the local network from attacks originating on the Internet. This network architecture is widely prevalent, to the point of every home-network router having a built-in firewall. We propose to add another purpose to these security gateways, by tasking them with protecting the Internet at large from attacks originating on the local network. The mechanism presented in this paper will allow any Internet party (henceforth, referred to as \emph{server}) to configure a security gateway for a local network not under their administrative control.

% How is it done today, and what are the limits of current practice?
Current gateways perform some filtering of outgoing traffic, primarily with the goals of preventing sensitive-data exfiltration and of detecting infected local nodes. For example, such filtering applies to outgoing traffic destined to or originated on certain ports (e.g., blocking known malware-related ports, blocking broadcast traffic) or with certain contents (e.g., blocking HTTP requests to known-bad servers). The administrator of the local network can manage such outgoing filtering, and does so with the goal of improving the security of the local network. Thus, it is often the case that the security of the rest of the Internet is not considered, leading to the common result of networked computers infected with bot malware for long periods of time because the bot code does not impede on the computer owner's use.

% What is new in your approach and why do you think it will be successful?
%% - remote party can configure the middlebox only for shaping traffic that directly affects them
Our goal in this work is to enable a server to configure any gateway to block attack traffic that reaches the server after passing through the gateway. We introduce \remoteGate, a protocol that enables this remote configuration of the gateway's filtering mechanism by the server under attack. This provides any gateway on the Internet a dual-purpose : (1) it will protect some local network from external attacks by blocking unwanted packets originating on the Internet; and, (2) it will protect the Internet from attacks by blocking unwanted packets originating from the local network. A significant fact about this design is that the parties most interested in being protected from attack are capable of setting the corresponding policy on the gateway.

% Who cares? If you succeed, what difference will it make?
%% - the whole Internet becomes an SDN, with a global security policy that reflects each Internet nodes' needs
We envision \remoteGate to enable the creation of a distributed, Internet-scale packet filtering system. For each server, there would be one or more (security) gateways that mediate its interactions with the rest of the Internet. These gateways are enforcing a policy that was jointly specified by Internet parties interested in traffic to and from this server. From the point of view of a attack victim (the server in this case), this system offloads any defense mechanism from the local security gateway (closest to the victim) to a security gateway located closer to the source of the attack, reducing the processing load on the victim's security gateway and the traffic load on the core of the Internet (by stopping attack traffic closer to source). An additional advantage is the increase in accuracy at each security gateway of filtering the traffic outgoing from its local network, in addition to monitoring the incoming traffic (which can be arbitrary). 

% What are the risks?
%% - carelessly designed system would open new attack vectors into local networks
%% 4 threats:
%% 1) reduces security of local network
%% 2) allows DoS on the local network's Internet connectivity
%% 3) reduces security of server, because server may reduce its own local security policy in expectation that the gateway security policy would protect it
%% 4) allows DoS on the server, by preventing honest clients from reaching the server 
To facilitate \remoteGate as a system that allows the policy of any security gateway to be manipulated, several significant security challenges have to be addressed to make our proposal practical. First and foremost, an adversary could weaken or even eliminate the current filtering policy of the gateway through careful attack packet design, leaving the local network open to external attacks. In other words, it may seem possible for the server to make the gateway learn a model that affects the filtering already provided by the existing policies in its firewall. However, we emphasize that with \remoteGate in place, this scenario is completely prevented (see Appendix~\ref{app:technical} for details.) 

Second, a malicious update to the policy could prevent the local network from accessing Internet services it needs, effectively installing a denial-of-service policy. Third, if the server offloads its security enforcement to security gateways elsewhere on the Internet, a malicious security gateway could pretend to enforce the new policy while still allowing attacks to pass through, exposing the now-vulnerable server to exploitation. Fourth, an attacker with access to many gateways could manipulate them to block benign traffic destined for a server, again installing a denial-of-service policy, this time against the server.

Beyond the security issues potentially introduced by the mutual lack of trust, moving the security policy onto a local network's security gateway presents new challenges related to computational requirements and to deployment. If, say, Amazon.com decides to offload its security policy from its security firewall to the security firewalls of its top-100 million customers, then each of those customers will presumably have to prepare their firewalls to handle the increased traffic filtering task. Furthermore, if only some small percentage of those top-100 million customers are willing to participate in this \remoteGate system, Amazon.com may decide that the effort of offloading only a small part of their security firewall is not worth the small benefit to be gained. Thus, in our system we have to design incentives for both parties to participate, in addition to addressing the newly introduced security problems.

% How much will it cost?
%% - N/A

% How long will it take?
%% - N/A

% What are the mid-term and final “exams” to check for success?
%% - N/A

Our approach to building this \remoteGate system is three-fold. First, policy updates must apply only to traffic destined for the server requesting the policy change. Second, servers will pay gateways for updating their policies. Third, servers will check through an interactive protocol that gateways enforce the new policy. Underlying these requirements is a common negotiation substrate through which the server conveys to the gateway the desired update to the security policy, the gateway demonstrates to the server that it deployed the policy update, and payment from server to gateway is exchanged. These steps are not trivial as they rely on a number of assumptions that we aim to discard through careful design. We overcome these challenges by proposing a \remoteGate architecture, in which any Internet-connected security gateway can offer policy-configuration services for a fee. For future work, we plan to present techniques to construct a secure realization of this architecture and analyze possible deployments in the contexts of some widely publicized attacks.

%We summarize the main contributions we make in this paper as follows:
%\begin{itemize}
%	\item We propose a \remoteGate architecture, in which any Internet-connected security gateway can offer policy-configuration services for a fee;
%	\item We present techniques to construct a secure realization of this architecture;
%	\item We analyze possible deployments in the contexts of two recent, widely publicized attacks.
%\end{itemize}

\begin{figure*}
	\centering
	\includegraphics[width=35em]{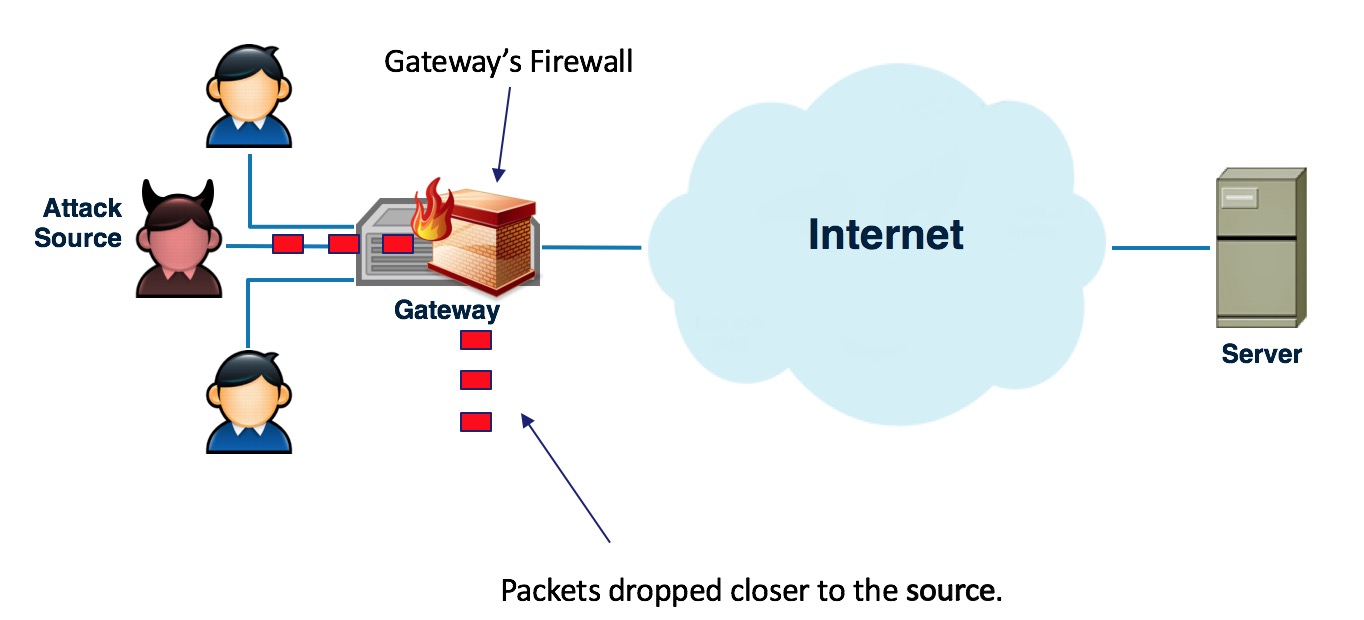}
	\caption{In our remote-configuration scenario, the server wishes to have the gateway block malicious traffic coming from attack source.}
	\label{fig:model}
\end{figure*}

\subsection{Motivation}

The greatest challenge is designing such a system for \remoteGate is understanding the benefit--risk tradeoffs seen be each party involved. We describe in this section this tradeoff for each type of party. For now we assume that there are three parties (see~\autoref{fig:model}): the attack source, the victim of the attack (server), and the security gateway (a system that can filter traffic between the attacker and the server). The main mechanism considered is for the server to send a request to the security gateway, and for the security gateway to change its security policy to block attacks originated from the attacker and destined for the server. 

\paragraph*{Motivation for the server}
The main goal for the server is to improve its network security, both in terms of accuracy and cost. In terms of accuracy, remotely configuring the policy of a security gateway allows the server to achieve a security policy that is customized for the attacker's network. This implies that the security policy resulting from the remote configuration takes into account both the attack traffic that is afflicting the server and the normal traffic that passes through the security gateway. As a simple example, denial-of-service attack traffic is best blocked by filtering it as close to the attacker as possible, as filtering it later in the network becomes almost impossible due to bandwidth constraints and due to similarity to normal traffic.

In terms of cost, remote policy configuration effectively offloads some of the security enforcement typically done at the server's own firewall to the security gateway. This is particularly useful for servers who run popular Internet-facing services and would need extremely powerful firewalls to handle all of the incoming traffic at a single ingress point. As a matter of fact, highly scalable Internet services often operated in a distributed fashion from multiple data centers, complicating the security management of all the firewalls involved. Offloading the security policy to remote security gateways would allow for cheaper and easier management of firewalls closer to the server, while making use of the large number of security gateways sitting mostly idle in front of the users accessing those servers.  

Underlying these goals of improved accuracy and lowered cost is the key need to maintain the same security guarantees as in the local firewall scenario. Because the enforcement of the security policy is now shifted to a remote security gateway not controlled by the server, our system needs to explicitly account for the potential loss of trust.

\paragraph*{Motivation for the Security Gateway}
The security gateway operates under one requirement, to allow all desired traffic sent by the network it protects to reach the Internet. In general, the definition of \emph{desired traffic} is not known and thus the gateway mostly operates by allowing traffic that it has seen before or that was explicitly whitelisted. In this context, the request to change the security policy sent by the server (which is an arbitrary node on the Internet from the point of view of the security gateway) introduces two problems. First, the gateway has to account for the additional resources to be spent configuring and deploying a new security policy. Second, the gateway has to ensure that the new policy still allows all desired traffic to reach the Internet.

To motivate the security gateway, we propose that it is paid for the resources it spends (e.g., additional computation time, additional memory, additional network latency) to handle the security policy requested by the server. To ensure that the new policy still satisfies the existing needs of the security gateway's protected network, our approach will use the existing security policy as a starting point.

\paragraph*{Deriving Security Policies from Examples}

An significant, operational roadblock for the security gateway is in handling the format and semantics of the policy-change request. Different gateways have completely distinct filtering semantics, distinct languages for specifying the filtering policies, and distinct levels of expressiveness for capturing a certain filtering functions. Exposing this information to anyone on the Internet, beyond the challenges of actually representing this information in a common format that everyone agrees upon, is also a security risk for the security gateway, as attackers can study the filtering languages, its expressiveness, and its semantics to find weaknesses that would allow evasion attacks.

In our approach, we decouple the policy engine of the security gateway from the server's request by having the server provide examples of attacks (i.e., network packets that form an attack) for the security gateway to analyze and create a security policy from. This restricts the communication between \remoteGate and the gateway to exchanging information about the inputs to the policy and the expected results ("block" or "allow"), without requiring a particular policy type or a particular enforcement engine design. The security gateway can use a simple firewall-style, rule-based system that only takes into account IP addresses and ports, or it can use a machine learning classifier that uses the whole contents of a network packet to make decisions.

\subsection{Model}
Our model consists of a \emph{server} $S$ that can receive packets from \emph{clients} over the Internet. Specifically, consider a network of clients connected to the Internet via a network \emph{gateway} $G$. We are interested in the communication of these clients with $S$. We assume there are multiple (at least two\footnote{The server's firewall and a gateway closer to the clients.}) gateways on the way from each client to the server. Each gateway is equipped with a \emph{firewall} that can control the flow of traffic sent to the Internet via a \emph{security policy}, which basically identifies specific characteristics about data packets passing through the gateway. As mentioned previously, usually these policies filter incoming packets to the local network behind the gateway. However, \remoteGate will focus on filtering the outgoing packets instead.

We consider an \emph{adversary} who controls some or all of the clients behind the gateway to attack the server by sending attack packets. We assume that this adversary does not collude with the gateway.\footnote{In Section~\ref{sec:discovery}, we present a gateway discovery algorithm with spoof checking that will help ensure that there is no such collusion. However, we leave it for future work to find such a gateway (that controls the traffic between the attacker and the server without any collusion with the former) with higher efficiency and precision.} The server wants to remotely configure the gateway's policy to impede the attack traffic targeted at itself by these adversarially controlled clients. 

We model the server and the gateway as rational players by provisioning incentives for both to participate in this protocol through carefully designed utility functions. In particular, we incentivize the gateway to participate in this protocol by providing a payment from the server upon successful deployment (and maintenance for a specified period of time) of a rule that prevents the attack. This can be done via any secure (digital) payment mechanism, that will allow both the server and gateway to contest the payment in a fair manner any time there is a conflict in the future. Such a mechanism is independent of our protocol and can be achieved by using a payment processing networks such as Visa~\cite{visa} or any such equivalent (even digital currency like Bitcoin~\cite{bitcoin} etc.).

\subsection{Technical Challenges}
Several technical challenges arise when dealing with a problem at the scale of the Internet, specifically given the anonymity of the remote users and their unknown intents. In this paper, we attempt to tackle two challenges that we think are most important to our problem statement and our model. First, we try to answer the question of how one can design a reward model that is incentive-compatible and fair to both the server and the gateway at the same time. Our solution achieves this by making the server pay the gateway a reward that is proportional to the accuracy of the rule that he deploys in its firewall. This incentivizes the gateway to participate in the protocol and more importantly, our design of the reward mechanism forces the gateway's dominant strategy to play honestly. Similarly, we model the server's incentive by exploiting its need to get the attack prevented as soon as possible. This allows us to force the server's dominant strategy to be incentive-compatible as well.

Another important challenge we tackle is for the server is to ensure that the gateway deploys the correct model and continues filtering packets based on it. We refer to this problem as the \emph{deployment verification} problem. This is important because it requires the server to not regret paying the gateway later upon realizing that the gateway maliciously deployed the wrong model (or no model at all). However, this is tricky because we never require the server to explicitly ask the gateway for any information about its firewall or the rule generated (apart from what is leaked from the verification during rule learning), for obvious privacy reasons. In this paper we tackle this problem by making the server specify the period for which it requires the gateway's service and pay a periodic fee for that period. The idea is that this fee will incentivize the gateway to retain the model in its firewall. Further, any breach, if encountered, is assumed to be resolved by a trusted third party, so later if there is still attack going on, the server can take help from this third party to resolve the conflict. This assumption of a trusted third party seems to weaken our model at first, but we emphasize in the paper that this is certainly not the only solution to this problem.

\subsection{Our Contributions}
This paper introduces \remoteGate, an efficient protocol for remote configuration of security gateways. The key contributions of \remoteGate are as follows:
\begin{enumerate}
	\item An incentive compatible mechanism for a server under attack to convince a remote gateway to change its firewall settings in order to prevent the attack.
	\item A light-weight interactive mechanism for some party (server) to help a remote party (gateway) learn a classifier for the data that the former provides, without revealing any more information than the accuracy of the classifier on the specified test set.
	\item A spoof checking mechanism to verify the source address mentioned in the attack packets. 
\end{enumerate}
Apart from these main contributions, our vision in designing \remoteGate is to provide a protocol that can be autonomously run by devices that are connected to the Internet. Although we restrict the discussion in the paper to a server and a gateway, the protocol as designed can be used with any device capable of sending and receiving messages over the Internet. With the advancing technology in hand-held devices, personal computers and the advent of pervasive computing and Internet of Things (IoT), \remoteGate can be used by any smart device to protect itself against incoming attack packets. 

Moreover, several other aspects of \remoteGate are voluntarily kept open to adapt to the emerging technologies. Be it the learning algorithm used by the gateway to obtain the rule to be deployed, or the kind of attack packets against which the defense is sought, \remoteGate allows for a wide variety of security services to be provided, not only on a local scale but on a global scale. Even with the digital payment system that will be used for reward payment or the service fee payment, upcoming trends like blockchain and cryptocurrencies can play a vital role in providing the required security guarantees that we need. We emphasize that the \remoteGate aims at fostering the ideology that while firewalls provide safety against incoming traffic, they can also be efficiently used to stop the attack packets from entering the Internet in the first place. 

\subsubsection{Properties of \remoteGate}
With the objectives as above and the vision of global software defined networking in mind, our design of \remoteGate has the following properties. 
\begin{enumerate}
	\item \textbf{Filtering Accuracy: }The accuracy of the rule deployed at the gateway's firewall is highly dependent on the way that rule is derived. \remoteGate aims to achieve a higher filtering accuracy by outsourcing the learning to the gateway and not generating the rule itself. This helps the gateway take into consideration the existing rules in its firewall to learn an optimal model that uses the limited computational resources judiciously and benefits from the rules for other outgoing traffic from the gateway. Moreover, this approach prevents the server from accidentally (or not) blocking other outgoing traffic through its firewall.
	\item \textbf{Decentralized Attack Prevention: }A common practice for attack prevention these days is to inform companies like Cisco, Netgear, Fortinet etc. about an ongoing attack upon which they observe the traffic and design rules or patches to be installed in the existing routers/firewalls. This service is often provided in exchange for a fee, which is decided by the service providers themselves. All seems good until we realize that this approach relies on a critical assumption that the attack packets have not originated out of collusion with the prevention mechanisms. \remoteGate envisions to remove this assumption by directly contacting the gateway to learn a rule and deploy in its firewall without involving any third party. This removal of trust in external entities helps protect certain types of attacks that are designed with a financial motive in mind.
	\item \textbf{Incentive-Compatible Reward Mechanism: }\remoteGate uses concepts from game theoretic mechanism design, specially principles from bilateral trading to devise negotiation schemes and auction mechanism for the reward that is agreed upon by both the server and the gateway. Since the two parties involved are modeled as being rational and mutually untrusted, this design strategy guarantees that the reward chosen is fair with respect to the true valuations of both the server and the gateway. Acting honestly is shown to be the dominant strategy in the \remoteGate protocol.
	\item \textbf{Light-Weight Protection: }By offloading filtering of attack packets as well as learning of these filters to the gateway, the server now runs a light weight \remoteGate client that is able to carry out our protocol without any need of a sophisticated hardware or infrastructure in general.  
	\item \textbf{Monetary Fairness: }An obvious question that comes to mind when thinking about \remoteGate's approach for attack prevention is how one can be sure that the gateway will eventually deploy the rule in its firewall and not just abort after the payment has gone through. If one tries to address this issue by paying the gateway at the very end of its services, then the same question can be asked on the server's end where there is no guarantee of the server aborting the protocol just before the payment begins. \remoteGate deals with this issue through (1) a periodic payment scheme designed to reward the gateway for its services so far, and (2) a conflict resolution protocol, which can either involve a trusted third party or use a decentralized solution like Ethereum smart contracts to ensure that any breach of agreements is caught and dealt with fairly. This way, \remoteGate ensures that both the server and the gateway are unable to put themselves into a monetary advantage by simply aborting the protocol at the right time.
	\item \textbf{Smart Protection: }We have designed \remoteGate trying to keep it as automated as possible. As long as the protocol is able to differentiate the attack packets from good packets and set some initial paramters, that is all we assume about what it needs to initiate the communication. An interesting question that comes to mind here is how does the server verify the identity or source of the attack packets, given that it is relatively easy to spoof address on the Internet. We handle this using a spoof check subroutine built into the \remoteGate protocol that first tries to find a gateway to interact with and only when it is convinced that the gateway is routing traffic from the attacker.
	\item \textbf{Small number of firewall installations: }Consider a situation in which thousands of computers around the world are under a distributed DoS attack originating from somewhere behind, say, the Tor network. If each of these computers were to install an expensive update to their respective firewalls that can curb the attack, the combined cost of this installation would far exceed the cost incurred in deploying a rule in a gateway (or two) that are closest to the source. The idea here is that once a rule is deployed in a gateway, it can prevent the attack packets from being routed to any victim of the attack over the Internet. 
	\item \textbf{Wide Applicability: }The vision of \remoteGate is to be a light weight client that can be installed widely in any device, ranging from heavy duty routers and servers to handy hand-help devices like the cell phone and other forms of IoT compatible devices. This way, attack packets are stopped for devices that are not firewall-compatible or even those that are not sophisticated enough to run antivirus functionality. 
\end{enumerate}

\subsubsection{Limitations of \remoteGate}
\label{sec:limitations}
There are several limitations of \remoteGate and areas of improvement that future research can look into. The solution provided in the upcoming sections are far from optimal, keeping in mind the novelty of attacks that exist over the Internet everyday. Nonetheless, we hope to achieve security in the situations where our assumptions hold and mention some important limitations of our protocol, as we see it, below.

\begin{enumerate}
	\item \textbf{Slow Responsiveness under DoS Attack: }\remoteGate, with its current algorithm, can be less responsive when the server is under a severe DoS attack. More concretely, if the attack packet frequency is too high, the server may not be able to communicate with the gateway at all in order to carry out the required message exchange. In this situation, the server should resort to conventional methods to stop the attack.
	\item \textbf{Dependence on Learning Algorithm: }An important assumption that \remoteGate makes is that the gateway is using a learning algorithm that will be able to classify the packets as required. For this reason, \remoteGate does not explicitly endorse the use of any specific learning technique for the gateway to use. Rather, it allows the gateway to choose one by itself, keeping in mind its rationality with respect to the the reward model. Of course, a (set of) particular algorithm(s) can always be hard-coded into the system if the situation demands so.
	\item \textbf{Classification of Packets: }One might challenge the classification provided by the server to the gateway on the training and test examples for the attack packets for its accuracy. In other words, does the server has an incentive in lying on the classification of the packets? If no, how does it obtain the information about these packets in the first place? For the first question, we assume that when trying to stop the attack, any false information provided by the server only delays the time it takes for the gateway to produce the right model. Thus, if the server chooses to lie on the classifications in some round, only to report the correct classification later, the gateway will be able to identify this behavior in the later stages of the algorithm and abort the protocol. This takes away the incentive from the server to lie and hence, play honestly. For the second question, we assume that the server is informed about the classification either through some threshold scheme (based on the frequency of packet arrival) or a human-assisted mechanism. Ofcourse, this question aligns with the broader concern about not performing the filtering at the server's end. Our answer to this analogy is that (1) the server needs to know \emph{what} it doesn't want to receive, and (2) the gateway is a bette candidate for rule deployment for certain types of attacks, as explained next.
	\item \textbf{Single Point Protection vs. \remoteGate: }Is it better to configure the server's firewall or hunt for an external gateway? The answer depends on the type of attack faced by the server. If the server receives attack packets from multiple sources but containing similar content, then an easier, and probably economic, solution would be to deploy a rule in its own filtering service and stop the packets. However, if the source is suspected to be local to a small group of users, then \remoteGate offers a promising solution, both in terms of preventing the packets from entering the Internet in the first place and possibly obtaining a higher filtering accuracy (as explained previously). The exact decision boundary, however, is, to some extent, subjective and thus, open to exploration and further research.
	\item \textbf{Requirement for Good Packets: }Any model generation algorithm that will filter attack packets from good packets will require samples of both types to create a sufficiently accurate boundary that does not suffer from overfitting or under-fitting. Assuming that the attack packets are known, it is entirely possible that the set of good packets is much wider and hence, the choice of an optimal set of good packets to use for training becomes crucial. In other words, if the attack packets all contain a pattern string in them, then every packet that does not contain that pattern string is a good packet. The important question here is how to represent the packets in a way that does not make the server send too many packets to the gateway before a correct model is built. We acknowledge that \remoteGate leaves this problem open for future work and for now, assumes that the server contains an optimal set of example packets for the gateway to train on.
	\item \textbf{One Model per Server: }For every server that contacts a given gateway for an attack, does that gateway deploy a different rule for each of these requests? If yes, then this is an indirect DoS attack on the gateway itself and its firewall, which will likely not be able to scale up to so many requests. However, if not, then how does the gateway decide the number of rules to deploy in order to maximize the filtering accuracy as well as the reward collected? This is one of the reasons why the choice of learning algorithm is kept open for the gateway to decide, since it can choose to build rules that, instead of serving the request by only one server, cater to a set of servers together. This is equivalent to developing rules that perform multi-class classification over the space of network traffic, instead of the binary case of separating attack from non-attack. As mentioned earlier, this is open to future work for further exploration and optimization.
	\item \textbf{Collusion: }Another important assumption \remoteGate makes is that the gateway and the server are not colluding with the attacker. Of course, in reality, this collusion is a very easily conceivable possibility, which must be dealt with in extensions of \remoteGate to malicious server/gateway settings. However, we emphasize that the current assumptions about the rationality of the two parties involved still make sense since there must exist at least one (non-colluding) gateway between the attacker and the victim to have any hope of attack prevention at all. The challenge then lies in the discovery of this \emph{good} gateway and carry out the protocol with it. Our algorithm for gateway discovery hopes to achieve this, although, its current version may report false positives in certain scenarios. 
\end{enumerate}

\subsection{Paper Organization}
This paper is organized into several sections. Section~\ref{sec:relatedWork} discusses the related work in this area. 
%Section~\ref{sec:background} provides details of some technical concepts that are used to design the protocol. 
Section~\ref{sec:remoteGate} presents the \remoteGate protocol and provides details about its various steps. 
%Section~\ref{sec:evaluation} provides some empirical observations and early projections about the performance of our protocol. 
Section~\ref{sec:discussion} presents discussion on some technical aspects of the protocol and sheds more light on the decisions made during specific stages of the protocol design. Section~\ref{sec:future} discusses some interesting open problems and opportunities for future work related to improvement of \remoteGate as well as the general idea of remote configurability. Finally, section~\ref{sec:conclusion} concludes our paper and highlights our main results and ideas. The reader is also encouraged to refer to the Appendix for some interesting FAQs (frequently asked questions) that we encountered during the design of \remoteGate and our answers to those questions.

\section{Related Work}
\label{sec:relatedWork}
%\paragraph{Remote configurations/router management}
%\todo[inline]{Complete this}
%
%\paragraph{Machine learning based firewall management}
%\todo[inline]{Complete this}

The main task in our protocol is for the gateway to learn a rule that correctly classifies the attack packets from the safe packets, using information provided by the server. As mentioned before, this can either be done using some static approach like identifying the IP address or information about the ports, or a more dynamic approach of machine learning that is applicable to even the most sophisticated attack packets. We use of the latter in our paper to make \textsc{RemoteGate} applicable to a wider variety of applications. 

Machine learning is emerging as a popular area for both the research community and the service industry. Companies like Amazon~\cite{amazonML} and Microsoft~\cite{team2016azureml} are already in the game along with academic research groups~\cite{potluru2014cometcloudcare} towards providing users around the world with online machine learning visualizations and training infrastructure, while promising privacy of user data. However, this emerging trend of machine learning as a service~\cite{ribeiro2015mlaas} comes with its own challenges. One of the first things that come to mind is incentivizing people for providing computational assistance with the training. Numeraire~\cite{numeraire} uses an auction mechanism to reward participants with the Numeraire token, while any mechanism for digital payment can be used as a financial incentive. 

The bigger picture, however, does not arise from financial incentives. Malicious intentions and privacy breaches pose problems to the use of any service that aims to be publicly available and relies on user data. This source of disruption can either be from the side of provider of data points, where the intention is to bias the output of the classifier towards incorrect classifications, or it can be from the side of the model generator where the client does not trust the service provider with respect to its services. For the former, techniques that handle adversarially generated examples and aim to create robust models~\cite{kurakin2016adversarial, goodfellow2014explaining, wang2016theoretical, dalvi2004adversarial, laskov2010machine, zhou2012adversarial} come in handy. While most of these works focus on recovery of the classifier after an adversary injects infected examples, there is also research like in~\cite{lowd2005adversarial} that tries to model this problem of constructing adversarial attacks by learning sufficient information about the classifier. The idea here is to better understand and determine the loopholes and the weaknesses in the learning algorithm to make them more robust to clever attacks in the future.

The scope of this paper, however, is along the lines of an untrusted service provider (the security gateway in this case). One particular example in this setting is performing execution of neural networks on an untrusted cloud, where the client who requires this computation is not sure if the cloud performed the inference correctly or not. An approach here is to require some form of proof from the cloud that convinces the client of the computation performed. There are many ways that this proof can work. SafetyNets~\cite{ghodsi2017safetynets} take an approach that is very specific to the underlying classifier (neural network in this case) by exploiting some mathematical properties of the learning algorithm. However, to generalize the proof technique, several other options come in handy. Some of the popular ones are listed as follows.
\paragraph{Verifiable Computation} \cite{wahby2016verifiable, wahby2015efficient, braun2013verifying, parno2013pinocchio} This approach aims at answering the fundamental question of how a local computer can verify some computation that it asks a remote server to perform, without explicitly performing the computation itself? There are may ways people have attacked this problem, ranging from expressing the computation in some high level language and then using the features of that language to generate a verifiable proof, to exploiting some mathematical properties of the underlying function to generate the proof. The idea is for the verifier to perform minimal computation in order to validate the prover's claims. In our setting, we perform this verification for the model generated by the gateway interactively during the training phase itself, without the server having to explicitly run a learning algorithm locally. Since we are doing this over a resource limited network, we perform verification keeping in mind both the computational resources at the server and the limitations on latency as posed by the network and the severity of the attack.
\paragraph{Interactive Proofs} \cite{goldwasser2008delegating, reingold2016constant, goldreich2017simple, thaler2013time, cormode2012practical,ghodsi2017safetynets} Yet another frequently used approach for verifying computations is to use an interactive technique in which the prover and the verifier participate in a sequence of message exchanges that allows the verifier to reach a verdict on the prover's claims. Often these exchanges happen in a way that the verifier queries the prover at certain checkpoints of the computation, and if it receives convincing responses, it delegates that the prover is performing the computation as required. The property that is crucial to ensure here is that if the prover performs the computation correctly, then it can always convince the verifier of the same, but if not, then we can bound the probability of the verifier being fooled by the prover by any desired value. Often this comes with a trade off in the number of bits exchanged as the error tolerance goes down. In our approach, as mentioned previously, we use an interactive scheme as well, in which the server and the gateway exchange messages until the former is convinced that the latter has generated the correct model for classifying the attack packets. Additionally, we use this interactive scheme in a clever way by using the progress of the gateway as a metric for the server to decide what reward to offer in the future rounds. This incentivizes the gateway to strive for best-effort model generation as the rounds progress and collect a high reward at the end.
\paragraph{Zero-Knowledge Proofs} \cite{goldreich1994definitions, chiesa2015cluster, duan2008practical, pathak2012privacy, yampolskiy2011ai} In many applications, privacy of the model learnt or the underlying data is of utmost concern when two or more parties are involved. Consider a setting where a server knows some private database and a client wants to perform a query over the database, whose result must not reveal any more information about the database than what is revealed by the result itself. Zero-knowledge proofs are cryptographic tools that are able to offer this security in a succinct and non-interactive manner. For our application, we could have used this technique since the gateway's firewall settings are its private bits, but \textsc{RemoteGate} offers a much simpler solution in which all the server needs is to be convinced that the gateway has the right model without requiring explicit knowledge of the model itself. In fact, the server will never query the gateway in the future about this model once it has been correctly deployed and is also willing to offer some tolerance to the accuracy obtained. All these relaxations allow for a simpler solution using an interactive scheme rather than using something as sophisticated as a zero-knowledge proof, for which we would have had to make assumptions regarding the gateway being powerful enough to generate such a proof and the server being equipped with zkSNARK(s) or equivalent to be able to verify such a proof. 
\paragraph{Secure MPC} \cite{mohassel2017secureml, lindell2009secure, ohrimenko2016oblivious, lindell2000privacy, chaudhuri2009privacy, ccatak2015secure} The problem in this paper can also be studied under the general umbrella of secure multi-party computation, in which a computation task is performed between multiple parties in a way that satisfies some security properties specific to the application. One way we could have moulded our problem in this framework is to have allowed the server and the gateway to collectively compute the classifier for the attack packets and then use this model to deploy at the gateway's firewall. However, MPC is usually a time intensive process and with the leaning problem being potentially arbitrarily complex, it can take much longer time than our simple interactive scheme. We emphasize that during attack prevention, the server may only have limited time before it can no longer communicate with the gateway, so we need a solution that runs in as small a time and using as little computation as possible.
\paragraph{Probabilistically Checkable Proofs} \cite{setty2012making, setty2012taking, setty2013resolving, ishai2007efficient} There is yet another technique that is used for proof generation in which the prover computes a bit string which the verifier can query for a small number of certain fixed locations and only verify that the bits on those positions are correctly computed. This provides probabilistic guarantees on the computation task performed since the verifier gets to choose the bit locations it wants to query. However, these strings are very long (often exponential in the input length) and hence requires both the verifier and the prover to have rich computational resources to participate in such a proof system. Again, in our setting, this would result in a heavy weight solution which would constrain the resources available at the server's and the gateway's end.

%%%%%%REMOVE THIS%%%%%%%%%%
\paragraph{Incentive Compatibility.} In order to make our protocol force the gateway to act honest, we use the popular approach of Incentive compatibility~\cite{nisan2007algorithmic} from game theory. Incentive compatibility is a property by which participants in a mechanism achieve best outcomes to themselves by acting according to their true preferences. We use a special form of this compatibility, called the \emph{dominant strategy incentive compatibility (DSIC)}, in which we guarantee that following the true beliefs in presenting preferences is weakly dominant in the sense that no deviation from this strategy will increase the payoff (reward in this case). We claim that the DSIC strategy for the gateway, with respect to the algorithm we provide for the server, is to act honestly. We achieve this by using the ideas from some popular number guessing games~\cite{nisan2007algorithmic, mendes2014guessing}. 

\section{The \remoteGate protocol}
\label{sec:remoteGate}
In this section, we present a general overview of the \remoteGate protocol to enable a server to contact a remote gateway and help it deploy a rule in its firewall that will filter out the attack packets from reaching the server. We provide details of the individual step of the algorithm subsequently.

\subsection{Protocol Overview}

\begin{figure*}[h!]
\captionsetup{justification=centering}
	\includegraphics[width=45em]{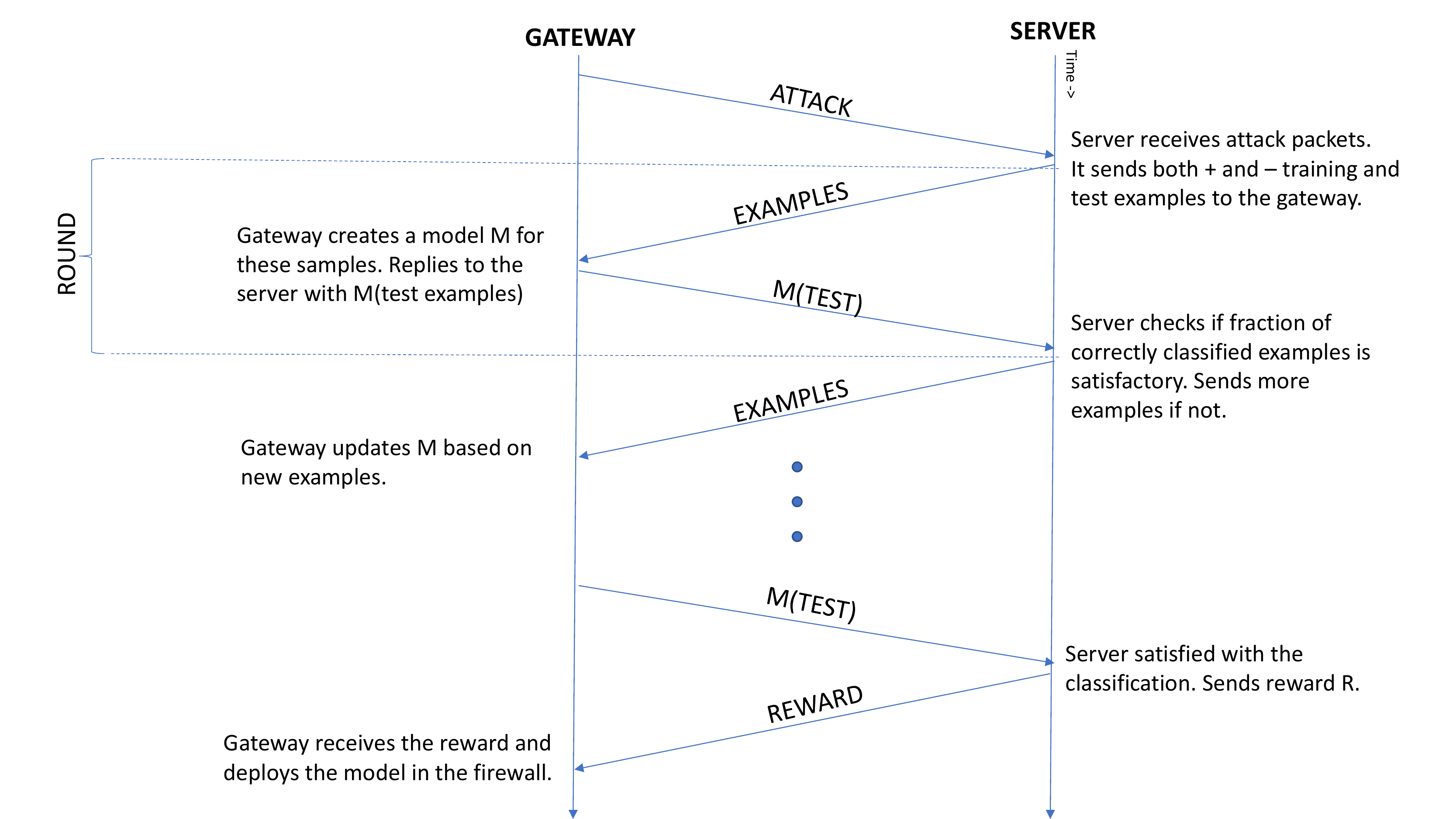}
	\caption{Schematic of the communication between the server and the gateway during a typical run of \textsc{RemoteGate} protocol.}
	\label{fig:schematic}
\end{figure*}

Our protocol consists of three main steps : \emph{Discovery, Learning} and \emph{Service}. In the \emph{Discovery} phase, the server uses the information in the attack packets to discover a list of gateways that are forwarding the traffic from the attack source. Upon obtaining this list, the protocol enters the \emph{Learning} phase, where the server initiates communication with the farthest gateway\footnote{Based on the information on round trip time (or some other metric) provided by the discovery algorithm. See Algorithm~\ref{alg:discovery} for our approach.} in this list and helps it learn the required model (policy) that will filter the attack packets. Once this learning is complete and the server pays the gateway for the deployment of this model, the \emph{Service} phase begins in which the server pays the gateway a(n agreed upon) fee, periodically, to retain the model in the gateway's firewall. This resembles an establishment of a service-agreement between the server and the gateway, which in case of a dispute, can be resolved by an external trusted third party. 

A schematic in Fig.~\ref{fig:schematic} provides a high level view of the major steps of our protocol. As mentioned before, the server, upon receiving attack packets, initiates an exchange with the gateway by providing the latter with positive and negative examples of the attack packets that the server has received. The gateway uses these examples to create a model that produces a classifier based on these training examples and provides a proof of this construction by classifying the test examples that the server provided. This way the gateway reveals only as much as much information about the model as is obtainable by the classifications it just provided. Upon receiving these classifications on the test examples, the server checks if the error in classification is within a pre-decided tolerance limit. If satisfied, the server pays the gateway the promised reward and the gateway deploys the model in its firewall for the packet filtering to begin. Otherwise, the server issues another round of training and testing with the gateway, on fresh examples, until it receives classifications that are \emph{sufficiently} accurate. 

The \remoteGate timeline is depicted in Fig.~\ref{fig:remoteGateTimeline}, in which the various events at different points of time have been specified. At $T_{start}$, the server starts receiving attack packets from the attacker, but has yet not identified that it is under attack. After receiving these packets for some time, the server detects the attack at time $T_{detect}$ and starts collecting samples for \remoteGate learning protocol. After sufficient number of samples have been collected, the server begins the \remoteGate protocol at time $T_{begin}$. It starts by discovering the list of gateways to interact with and once. Once the candidate gateways have been discovered at time $T_{discovery}$, the server runs a spoof check with these gateways to determine what gateway to interact with. At time $T_{spoof}$, this gateway is determined and the server starts negotiating for the initial reward. Upon successful negotiation, the learning phase begins at time $T_{init}$ and the server engages in rounds of training and testing with the gateway. After the $r$ rounds of learning have completed (which satisfy the server's error tolerance) at time $T_{learned}$, the server pays the gateway the promised reward. At time $T_{paid}$, this payment is successful and the gateway prepares deploying the model in its firewall. Once the model is successfully deployed by time $T_{deployed}$, the \remoteGate service period starts during which the filtering of attack packets happens through the gateway's updated firewall. This service ends at time $T_{end}$ after which both the server and the gateway terminate the protocol.   

We provide details of each of these steps in the upcoming sub-sections. We would also like to refer the reader to  Appendix~\ref{app:technical} in this paper for some frequently asked questions that we encountered with respect to our approach in this paper.

\begin{algorithm*}[h!]
\caption{\remoteGate protocol for the server.}
\label{alg:remoteGateServer}
\begin{algorithmic}[1]
\Procedure{\remoteGate-Server}{$ \mathcal{A}, \Delta$}
	\State $\texttt{GList} \gets \Call{GatewayDiscovery}{\mathcal{A}}$. \Comment{\texttt{GList} ordered by farthest gateway first.}
	\State $\texttt{FList} \gets \{ G \in \texttt{GList} \mid \Call{Server-SpoofCheck}{G,\mathcal{A}, \Delta} = \texttt{false}\}$. \Comment{Remove spoofed gateways.}
	\For{$G \in \texttt{FList}$} 
		\State $\texttt{LOG}_S \gets \emptyset$ \Comment{Initialize server log.}
		\State Initialize $\varepsilon, v_S$ and $\Gamma_\text{service}^S$.
		\If{$\Call{Server-StartService}{\varepsilon, \mathcal{A}, v_S, \Gamma_\text{service}^S} = \texttt{false}$}
			\If{$\texttt{LOG}_S$ contains "\texttt{BREACH}"}
				\State Issue a call to \Call{Conflict-Resolution}{$\texttt{LOG}_S$}.
	\EndIf	
	\Else
		\State \textbf{Output} $\texttt{LOG}_S$ and \Return
	\EndIf
	
	\EndFor
\EndProcedure
\end{algorithmic}
\end{algorithm*}

\begin{algorithm*}[h!]
\caption{\remoteGate protocol for the gateway.}
\label{alg:remoteGateGateway}
\begin{algorithmic}[1]
	\Procedure{\remoteGate-Gateway}{$\ $}
		\While{\texttt{true}}
		\If{\texttt{PING} packet received from server $S$ with attack packet $A$ and $\Delta$}
			\If{$\Call{Gateway-SpoofCheck}{S,A, \Delta} = \texttt{false}$}
				\State Wait for the \texttt{INIT} packet from the server. \textbf{break} if timed out.
				\State $\texttt{LOG}_G \gets \emptyset$ \Comment{Initialize gateway's log.}
				\State $\Call{Gateway-StartService}{\texttt{COMMIT}(\rho_{S,1}), \Gamma_\text{service}^S, A}$
				\If{$\texttt{LOG}_G$ contains "\texttt{BREACH}"}
					\State Issue a call to \Call{Conflict-Resolution}{$\texttt{LOG}_S$}.
				\EndIf
				\State \textbf{Output} $\texttt{LOG}_G$.
			\EndIf
		\EndIf
		\EndWhile
	\EndProcedure
\end{algorithmic}
\end{algorithm*}

\subsection{Initialization}

The \remoteGate system is installed as an application on both the server and the gateway for them to engage in the sequence of communications as required by our protocol. We provide a high level description of this application in Algorithms~\ref{alg:remoteGateServer} and~\ref{alg:remoteGateGateway} (for the server and the gateway, respectively). 

The server starts \remoteGate by determining the average time interval $\Delta$ between two attack packets in its set $\mathcal{A}$ of all the attack packets that have been received so far. Once this information is obtained, the server launches a discovery protocol (Algorithm~\ref{alg:discovery}) to obtain a list, denoted $\texttt{GList}$, of gateways that may be forwarding these attack packets. The gateways in this list are marked with some additional information about how far the gateway is from the server (for example, by using round trip times). However, this list is prone to containing gateways whose identities may have been spoofed or who were wrongly discovered. The server removes such \emph{false-positives} by running a spoof check (Algorithm~\ref{alg:serverSpoof}) with each of the gateways in the list and obtains a filtered list $\texttt{FList}$ of gateways, ordered similar to $\texttt{GList}$. Starting from the first gateway in this list, the server begins with resetting the protocol communication log, denoted $\texttt{LOG}_S$, and based on the gateway, decides on what error tolerance $\varepsilon$, value of the reward $v_S$ and duration of service $\Gamma_\text{service}^S$ to negotiate. With these parameters, the server then begins the \remoteGate protocol with the gateway. If the protocol was successfully deployed and serviced for the negotiated time period, the server terminates the protocol with success. However, if at any time during the protocol the server's log contains the string "\texttt{BREACH}" (due to detection of malicious behavior from the gateway), the server resorts to the third party for appropriate conflict resolution. Else, it restarts this process with other gateways in $\texttt{FList}$. 

\begin{figure*}[t]
	\captionsetup{justification=centering}
	\includegraphics[width=\textwidth]{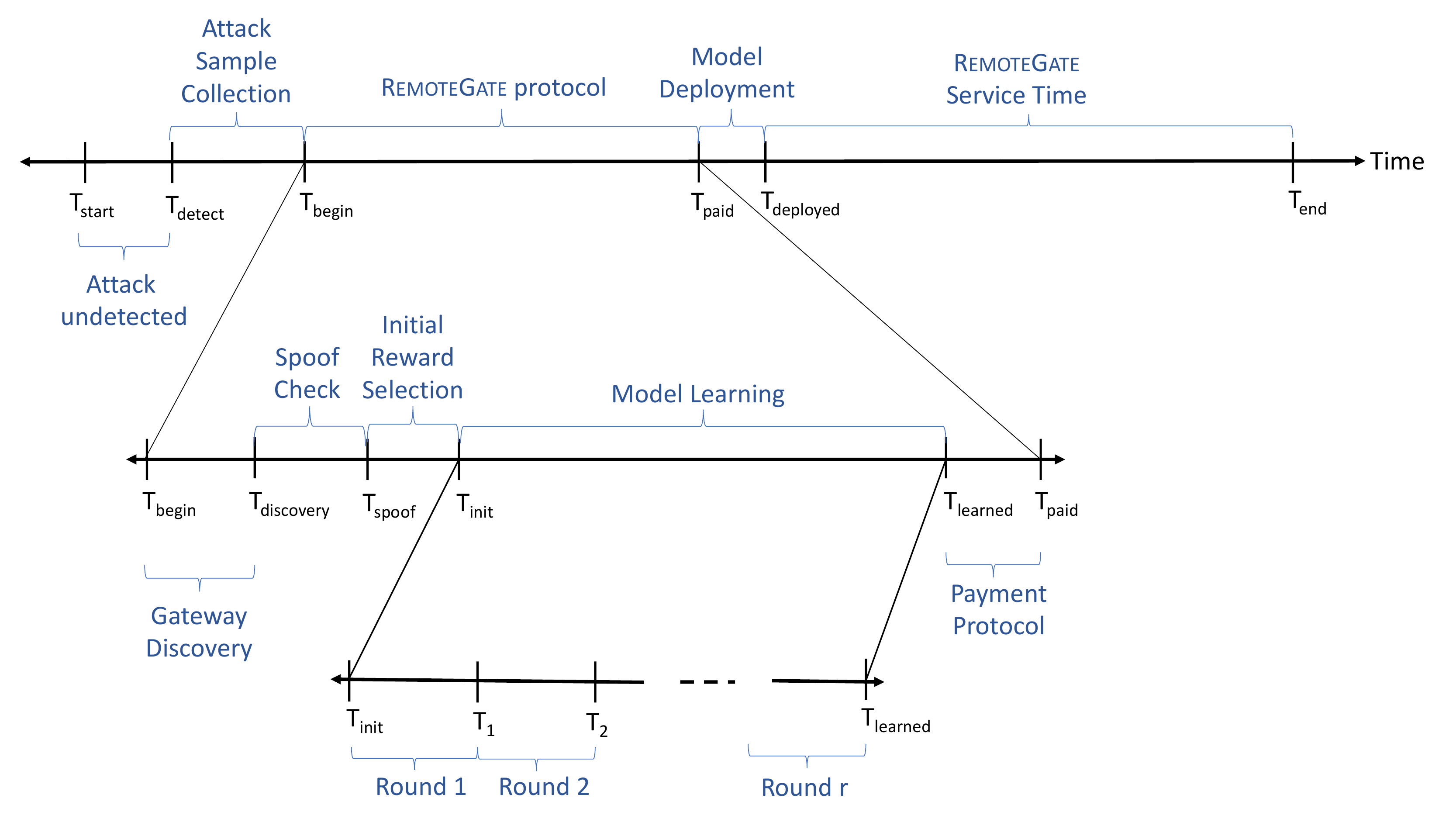}
	\caption{\remoteGate timeline diagram showing what events happen at various times during a run of the protocol. The plot is not drawn to scale. With respect to this plot, the total time taken by \remoteGate to prevent the attack is $T_{deployed} - T_{detect}$.}
	\label{fig:remoteGateTimeline}
\end{figure*}

Unlike the server, the application installed at the gateway's end (Algorithm~\ref{alg:remoteGateGateway}) pools until some server contacts it with a spoof check packet, called \texttt{PING}\footnote{Note that this ping packet is different from the standard ping packet in network routing protocols.}. At this point, the gateway recognizes that the server is trying to verify if the attack packet $A$ (with inter-packet time of $\Delta$) actually came from the local network behind the gateway. The gateway engages in this spoof check to ensure that its address was not spoofed by the attacker (Algorithm~\ref{alg:gatewaySpoof}). If the spoof check returns with the indication that the attacker belongs to the local network, the gateway begins the learning phase of the \remoteGate protocol with the server. Similar to the server's log, the gateway also maintains a log, denoted $\texttt{LOG}_G$, for all the communication it does with the server. If at any point of time during the protocol the gateway detects malicious behavior from the server, it reports it to the log as a "\texttt{BREACH}" string, and resorts to the conflict resolution protocol.  

\subsection{Gateway Discovery}
\label{sec:discovery}
%!TEX root = draft.tex

The first step for the server, once it is in posession of attack packets, is to discover the potential gateways that can protect the server by blocking any future attack packets. The list includes any gateway that is capable of blocking or filtering traffic from the attack source on its way to the server, and that is willing to participate in the \remoteGate protocol.

The challenge in enumerating the gateways on the path between two Internet nodes lies in the fact that there may be multiple paths between such nodes, each one used at different times or for different types of messages or for different directions of communication. The Internet is designed as a potentially directed communication graph, where routing policies can move traffic along unexpected routes given the observable topology. Commercial agreements driven by cost and traffic volumes, load balancing, privacy requirements, regulatory constraints, and availability and reliability needs ensure that routing policies at all levels (from local networks to large autonomous systems) are complex, often hard to infer, and uncertain to rely upon. We do not solve this problem here, rather we focus on the simple case of symmetric, stable paths, and then we discuss the corner cases left to solve in future work.

\paragraph{Symmetric, Stable Internet Paths}
Our approach is inspired by the \texttt{traceroute} diagnostic tool, which measures transit delays of packets between two nodes while also recording the latencies to each intermediate router on the path. Our Algorithm~\ref{alg:discovery} makes use of ICMP, just like \texttt{traceroute}, and could be perceived as an extension of ICMP messages in order to provide the additional information.\footnote{It is worth mentioning that some ICMP message types such as \textit{Router Solicitation} and \textit{Router Advertisement} appear relevant our gateway-discovery step, but we are not aware of ICMP messages or extensions that completely address our needs.} The basic mechanism is to send packets to the (apparent) attack source and to use time-to-live values to address the intermediate routers and request that they return their \textit{managed IP range}, if any. Let us denote such a request packet as \texttt{DISCOVER}. By ``managed IP range'' of a router we refer to the set of IPv4 addresses the router can reach on interfaces other than the one on which the \texttt{DISCOVER} packet was received, and for which the router is the sole entry--exit node. In other words, all traffic to and from the managed IP range passes through that router. While some routers on the Internet do manage IP ranges (as defined here), others do not, and thus we do not expect all gateways on a path between two Internet nodes to respond to a \texttt{DISCOVER} ICMP request. At a minimum, in typical configurations, there is a firewall on the server's node and a firewall on the attack-source's node that have their own managed IP ranges (respectively, the whole IPv4 address space except for the server's IP address, and the IPv4 address of the attack-source's node).

The server now holds a list of gateways, their correponding latencies, and their managed IP ranges. To proceed, the server sorts the list by latency, validates that each managed IP range contains the attack-source IP address, and selects the farthest gateway. We note that the managed IP ranges do not need to increase or decrease monotonically in this ordered list and the only value they share is the IP address of the attack source. This implies that only the latency information can be used to determine the relative position of the gateways on the path between the server and the attack source.

% idea: use traceroute-like protocol

% any gateway on the path can respond
% - some can lie about their control over all traffic from 
% - distinct problem from that of censorship discovery, since here we are interested in finding gateway willing to participate in our protocol

\begin{algorithm*}[h!]
\caption{Gateway discovery algorithm for the server.}
\label{alg:discovery}
\begin{algorithmic}[1]
	\Procedure{GatewayDiscovery}{$\mathcal{A}$, $\mathit{TTL}_\mathit{max}$} \Comment{$\mathcal{A}$ is the set of attack packets.}
	\State $\texttt{GList} \gets \emptyset$.
	\State $t \gets 0$. \Comment{TTL value for the current \texttt{DISCOVER} packet.}
	\State $s \gets \text{one of} \left\{ \text{\it source}(A) \mid A \in \mathcal{A} \right\}$. \Comment{Assume this is a singleton set.}
	\While{$t < \mathit{TTL}_\mathit{max}$} \Comment{$\mathit{TTL}_\mathit{max}$ is the maximum hops to explore.}
		\State Increment $t$.
		\State Send ICMP \texttt{DISCOVER} packet, with TTL $t$, to IP address $s$.
		\State Wait for ICMP response packet. \textbf{break} if timed out.
		\If{ICMP \texttt{GATEWAY-RESPONSE} packet received from IP address $G$}
			\State $\langle \text{latency\ } l, \text{IP range\ } \Pi\rangle \gets $ packet contents.
			\If{$s \in \Pi$}
				\State Add $\langle G, l \rangle$ to $\texttt{GList}$.
			\EndIf
		\EndIf
		\If{ICMP \texttt{ECHO-REPLY} packet received}
			\State \textbf{break}
		\EndIf
	\EndWhile
	\State Sort \texttt{GList} by latencies.
	\State \Return $\left\{ G \mid \langle G, l \rangle \in \texttt{GList} \right\}$.
	\EndProcedure
\end{algorithmic}
\end{algorithm*}

\paragraph{Asymmetric or Dynamic Internet Paths}
We leave for future work the question of discovering the gateways on paths that are asymmetric (i.e., source-to-server uses a different path than server-to-source), or where the paths change rapidly as can be the case for some mobile nodes.

\vspace*{\baselineskip}
We note that the above Algorithm~\ref{alg:discovery} together with the rest of the system do not assume that each gateway is honest in any of its interactions with the server. At this point, we need to consider that some gateways may claim to control larger IP ranges than the ones they really manage, in order to be more likely to participate in the protocol and earn money. This challenge is addressed by authenticating the claims of the chosen gateway and ensuring that the attack does truly originate from a network managed by this gateway.

\subsection{Authentication}
Having discussed the mechanism for gateway discovery, we now discuss details of how the server and the gateway determine if the attack is really coming from the local area behind the gateway or not. Note that even though the attacker may have incentives to spoof the gateway's address in the attack packets, it is important to observe that it is of no use to him if he spoofs the server's address instead. This is because if the server's address is spoofed, the gateway's replies to the spoof check algorithm will go to the spoofed address instead and the protocol will not be able to proceed. Hence, we only talk about spoof detection with respect to the source address in the attack packets.

\begin{algorithm*}[h!]
\caption{Spoof checking algorithm by the server. Checks if the attack is actually coming from behind the gateway. Returns \texttt{true} if $G$ was spoofed.}
\label{alg:serverSpoof}
\begin{algorithmic}[1]
	\Procedure{Server-SpoofCheck}{$G, \mathcal{A}, \Delta$} \Comment{$G$ is the gateway's IP address}
	\State Ping $G$ with an attack packet $A \in \mathcal{A}$ and $\Delta$. \Comment{This is the \texttt{PING} packet.}
	\State Wait for the response from $G$. \Return \texttt{true} if timed out.
	\If{\texttt{VERIFIED} packet received from $G$}
		\State \Return \texttt{false}
	\EndIf
	\If{\texttt{CHECK} packet received from $G$}
		\State Wait for $5\Delta$ time steps for an attack packet $A'$ to be received. \Return \texttt{false} if timed-out.
		\If{$A'$ contains $\texttt{stamp}$}
			\State Send $A'$ to $G$.
			\State Wait for the response from $G$. \Return \texttt{true} if timed out.
			\If{\texttt{VERIFIED} packet received from $G$}
				\State \Return \texttt{false}
			\EndIf
		\EndIf
	\EndIf
	\State \Return \texttt{true}
	\EndProcedure
\end{algorithmic}
\end{algorithm*}

Our algorithms for spoof check are described in Algorithms~\ref{alg:serverSpoof} and~\ref{alg:gatewaySpoof}. The server prepares a special packet, called \texttt{PING}, which contains an attack packet $A$ and the average inter-packet time $\Delta$ as observed by the server. When the gateway receives this packet, it is either able to immediately recognize that it is from its local network, in which case the gateway sends a special packet, called $\texttt{VERIFIED}$ to the server. However, if this is not the case, then the gateway generates a \texttt{stamp}, which is essentially a privately known string to the gateway, and sends a $\texttt{CHECK}$ packet to the server what contains $\texttt{stamp}$. If the server receives a $\texttt{VERIFIED}$ packet, it knows that the gateway's address was not spoofed and its spoof check is complete. Else, the server observes all attack packets for $5\Delta$ time steps and looks for the string $stamp$ in them. The idea is that the gateway is now attaching \texttt{stamp} to each packet that it forwards to the server for $5\Delta$ time steps. Thus, if the attack really came from behind the gateway, the server must receive an attack packet with this \texttt{stamp} attached. 

If the server intercepts a packet with the \texttt{stamp}, it immediately sends the packet to the gateway and the gateway replies with a \texttt{VERIFIED} packet. The learning phase can now start. However, if no such packet is received, the server concludes that the attack did not come from the gateway's local network and removes that gateway from $\texttt{GList}$. Our assumption here is that one of the factors that distinguishes an attack packet from a good packet is the frequency of packet arrival. If the attacker did not send any attack packets for $5\Delta$ time steps, then essentially the frequency of attack packets is reduced five fold. This number, hence, can be set to whatever value that seems sufficient for the server to stop classifying the packets as a attack packets. 

\begin{algorithm*}[h!]
\caption{Spoof checking algorithm by the gateway. Checks if the packet is actually coming from its local network.}
\label{alg:gatewaySpoof}
\begin{algorithmic}[1]
	\Procedure{Gateway-SpoofCheck}{$S, A, \Delta$}
		\If{source of $A$ identified to be from local network} \Comment{Through trivial checks}
			\State Send $\texttt{VERIFIED}$ to the server.
			\State \Return \texttt{false}
		\Else
		 	\State Generate $\texttt{stamp}$.
		 	\State Send $(\texttt{CHECK}, \texttt{stamp})$ to the server.
		 	\State Attach \texttt{stamp} to all outgoing packets to the server for $5\Delta$ time steps.
		 	\State If no response from the server in this time, \Return \texttt{true}.
		 	\If{invalid $\texttt{stamp}$ in the packet}
		 		\State \Return \texttt{true}
		 	\Else
		 		\State Send $\texttt{VERIFIED}$ to the server.
		 		\State \Return \texttt{false}
		 	\EndIf
		\EndIf
	\EndProcedure
\end{algorithmic}
\end{algorithm*}

Another important assumption we make here is that the \texttt{stamp} is irreproducible by the attacker. This can be done by the establishment of an authenticated channel between the server and the gateway prior to spoof check, or through any mechanism that prevents the attacker from obtaining this string. For now, we leave the details of establishment of this \texttt{stamp} efficiently as future work and refer readers to existing solutions for sending private information over the communication link. 

\subsection{Initial Reward Selection}
We now discuss how the server and the gateway decide on the initial reward for the \remoteGate protocol. The selection of this reward is critical to the incentive-compatible nature of our protocol since this is the reward that essentially incentivizes the gateway to participate in the protocol with the server at the cost of using its computation resources and modifying its firewall. Since the server and the gateway are both modeled as rational players, we design a negotiation scheme which is dominant-strategy incentive compatible. In other words, we design a scheme in which the dominant strategy for both the server and the gateway is to bid their true valuations for what the protocol. Algorithms~\ref{fig:serverInit} and~\ref{fig:gatewayInit} provide detailed steps for this negotiation. 

The problem of deciding the initial reward falls under the broader category of bilateral trade, introduced in the seminal work of Myerson and Satterswaite~\cite{myerson1983efficient}. In this setting, a single seller is the owner of an indivisible item which can be traded with a single buyer. In our setting, the gateway is the seller, the model it learns is the \emph{item} being sold and the server is the buyer. Apart from (dominant-strategy) incentive compatibility (DSIC), bilateral trade problems typically also try to ensure two more properties: (1) \emph{individual rationality}, which in our case translates to the fact that the server and gateway participate voluntarily in the protocol and can leave anytime if incentivized to do so, and (2) \emph{budget balance}, which in our case means that neither the server and the gateway suffer losses in terms of the rewards paid or received. A recent result by~\cite{colini2016approximately} shows that no dominant-strategy mechanism which is also individually rational and budget balanced can guarantee more than $0.749$ of the optimal welfare. Since we don't assume any shared randomness between the server and the gateway or any shared knowledge of the distribution of initial rewards, we use a deterministic take-it-or-leave-it strategy, also called \emph{fixed-price bilateral trade}, as devised in~\cite{blumrosen2016almost} to achieve $0.5$ of the optimal welfare, which is optimal for any deterministic strategy (Proposition $3.3$ in~\cite{blumrosen2016almost}).

Let $v_S$ be the server's true valuation/belief of how much the gateway should be paid per correctly classified example in the first round, based on factors like how important the prevention of the attack is as well as an estimate of the computational efforts used by the gateway to generate a sufficiently accurate model. Similarly, let $v_G$ be the gateway's true valuation of the reward it should receive. The authors would like to refer the reader to Appendix~\ref{app:paramter} for details on how $v_S$ and $v_G$ can be chosen in practice. Let $\rho_1$ be the reward chosen in this initial negotiation phase. This reward, once agreed upon, will be paid for the accuracy reported by the gateway on the test examples that the server sends in the first round of the learning phase. Then, we define the utility function of the server to be $v_S - \rho_1$, since it is trying to negotiate as small a reward as it can. Similarly, we define the gateway's utility as $\rho_1$, since it is trying to maximize the reward it collects. Based on the setting just defined, we choose the reward by setting  $\rho_1 = \min \{v_S, v_G\}$, which becomes $\rho_1 = v_G$ assuming $v_G \leq v_S$.\footnote{If $v_S < v_G$, then the negotiation cannot happen assuming that the server will not pay more than $v_S$ and the gateway will not accept a payment less than $v_G$.} From Theorem $3.1$ in~\cite{blumrosen2016almost}, this mechanism guarantees DSIC, individually rationality, budget balance and achieves at least $1/2$ of the optimal social welfare.\footnote{The result in~\cite{blumrosen2016almost} holds for any distributions from which $v_S$ and $v_G$ are chosen. In particular, it is true for point distributions.}

\begin{algorithm*}[h!]
\caption{Initialization for the server. Called when \textsc{Server-SpoofCheck}$(\mathcal{A})$ succeeds.}
\label{fig:serverInit}
\begin{algorithmic}[1]
	\Procedure{Server-StartService}{$\varepsilon, \mathcal{A}, v_S, \Gamma_\text{service}^S$}
		\State Sample $A \in \mathcal{A}$ at random.
		\State Send $(\texttt{COMMIT}(v_S), \Gamma_\text{service}^S, A, \texttt{COMMIT}(\varepsilon))$ to the gateway. \Comment{This is the \texttt{INIT} packet.} 
		\State Wait for the response from the gateway. \Return $\texttt{false}$ if timed out.
		\If{$\Gamma_\text{service}^G$ or $\texttt{fee}$ or $\iota$ not acceptable}
			\State \Return $\texttt{false}$
		\Else
			\State $\rho_1 \gets v_G$
			\If{$\rho_1 > v_S$}
				\State  \Return $\texttt{false}$
			\Else
				\State Send $\paran{\rho_1, \texttt{OPEN}(v_S)}$.
				\State \Return $\Call{Server-Learning}{\mathcal{A}, \rho_1, \varepsilon, r_{\text{max}},\Gamma_\text{service}^G, \texttt{fee}, \iota}$
			\EndIf
		\EndIf
	\EndProcedure
\end{algorithmic}
\end{algorithm*} 

Having understood the initial reward selection mechanism, we now discuss the steps in Algorithms~\ref{fig:serverInit} and~\ref{fig:gatewayInit} in detail. The server begins with sampling an attack packet $A$ at random the set $\mathcal{A}$ of all attack packets. It uses a commitment scheme to commit the value of $v_S$ to the gateway and sends along the packet $A$, value of the service duration period $\Gamma_\text{service}^S$ and a commitment of the error tolerance $\varepsilon$. The two commitments are made for the gateway to be able to verify the server's computation of $\rho_1$ and the total reward later in the protocol. Upon receiving these values from the server, the gateway decides on the length $\Gamma_\text{service}^G$ of the period it can actually provide service for, the \texttt{fee} it wishes to charge the server for the service once the model is deployed in the firewall, the number $\iota$ of post-paid installments it is willing to accept for the payment during the service period, and the value of $v_G$ based on the estimates of computation and resources that will be used to learn the model. Upon decision of these parameters, the gateway sends them to the server and wait for a response.

\begin{algorithm*}[h!]
\caption{Initialization for the gateway.}
\label{fig:gatewayInit}
\begin{algorithmic}[1]
	\Procedure{Gateway-StartService}{$\texttt{COMMIT}(\rho_{S,1}), \Gamma_\text{service}^S, A$}
		\State Initialize $\Gamma_\text{service}^G, \texttt{fee},\iota$ and $v_G$ based on $A$ and $\Gamma_\text{service}^S$. \Comment{$\texttt{fee}$ is (post)paid in $\iota$ installments.}
		\State Send $(\Gamma_\text{service}^G, v_G, \texttt{fee},\iota)$. 
		\State Wait for the response from the server. \Return if timed out.
		\If{$\rho_1 < v_G$ (or computed incorrectly) or $\texttt{OPEN}(v_S)$ does not verify}
			\Return
		\EndIf
		\State $\Call{Gateway-Learning}{\rho_1, \Gamma_\text{service}^G, \texttt{fee}, \iota}$
		\EndProcedure
\end{algorithmic}
\end{algorithm*}

If the server determines that the service period proposed by the gateway is non-acceptable (may be too short or too long), or that the \texttt{fee} is too high or that the number of installments is unreasonable (typically too low), then it can discontinue the protocol with the gateway and move to the next gateway, if available, in $\texttt{FList}$. Else, the server computes the reward $\rho_1$ as the value $v_G$ that the gateway proposed (for the reasons mentioned above) and verifies that the payment is not more than its budget $v_S$. The server reports the gateway of the reward $\rho_1$ selected and opens its commitment for $v_S$ for the gateway to be able to verify the computation of $\rho_1$. It then proceeds to beginning the learning phase with the gateway. Upon receiving the server's message, if the gateway is able to verify the computation of $\rho_1$ and that it gets paid at least $v_G$, it proceeds to beginning the learning phase with the server. 

\subsection{Rule Learning}
We now discuss the learning phase of the \textsc{RemoteGate} protocol that allows the server to stop the flow of attack packets it is receiving by helping the gateway come up with a classification rule that can be deployed in the gateway's firewall. This rule (or model) must filter out the attack packets (up to the server's error tolerance) and prevent the attacker from sending more such packets to the server. 

We present the details of the learning phase in Algorithms~\ref{fig:serverAlgo} and~\ref{fig:gatewayAlgo}. These algorithms proceed in rounds, indexed by $r \geq 1$. In each round, the server sends the gateway some training examples, $\texttt{TRAIN}_r$, for the gateway to train its model, $\texttt{MODEL}_r$, and also some test examples, $\texttt{TEST}_r$, for the gateway to apply its model and reply back to the server which the classifications, denoted $\texttt{LABEL}_G\paran{\texttt{TEST}_r}$, it obtains.\footnote{The server poses a hard limit (unknown to the gateway) on the number of rounds $r_\text{max}$ it is willing to engage in with the gateway before giving up and switching over to a different gateway (the next one in $\texttt{FList}$, if any). This limit can be decided based on the number of attack packets available with the server as well as the severity of the attack and the urgency of its prevention as seen fit by the server. It can also be dependent on the total reward the server is willing to offer the gateway as well as the maximum time the server is willing to engage in this protocol for.} The gateway uses these training examples to train a model that correctly classifies the attack packets (best-effort) and then provides the server with what the model predicts on the examples in the test set. Since it can take some time for the gateway to train its model and it is not in the best interest for the server to wait indefinitely for the response, we can assume that the server and the gateway can agree on the maximum time the gateway will engage in training the model, similar to the negotiation of the initial reward. With respect to this, we assume that the gateway, if playing honestly, performs best effort training using whatever learning algorithm it seems fit for the classification. We leave the details of deciding a good time limit for training and the choice of the learning algorithm to future work. 

The server verifies the response from the gateway against her knowledge of the true labels on the test set. It begins by computing the accuracy, $\texttt{acc}_S\paran{\texttt{TEST}_r}$, of the labels that the gateway has provided. If this accuracy is less than the server's error tolerance, $\varepsilon$, it pays the gateway the reward, $\texttt{REWARD}_r$, and terminates the protocol. The gateway, upon the receipt of the reward, deploys the model in its firewall. However, if the server is not satisfied with the gateway's accuracy, it sends more training examples (and a test set) in the next round for the gateway to retrain the model on more labeled examples, with the hope that this retraining will improve the classification accuracy. The server keeps accumulating the rewards for the rounds until it is satisfied with the gateway's accuracy (or the round limit is reached, in which case it terminates the protocol without paying the gateway any reward) and then pays the gateway the total reward accumulated so far. Note that even after the payment has gone through, the server must ensure that the gateway deployed the correct model as promised. We discuss the details of this deployment verification in the next subsection.

The scheme described above works well when both the server and the gateway are honest. However, since we assume that both the server and the gateway are rational instead, we update the rewards in each round in a clever way that forces both the server and the gateway to act honest. For the gateway, we do this by ensuring that any deviation from the honest strategy will only decrease the reward the gateway can expect to receive. For the server, we achieve our goal by providing the gateway with the actual labels of the test packets in the round following the one they were sent in, so the server is dis-incentivized to lie on the accuracy and hence, the reward. The result is a carefully crafted protocol that both the server and the gateway will run in order to incentivize honest model generation to filter the attack packets.\footnote{Recall that in this paper, we do not allow for the possibility of the gateway to collude with the attacker and rationalize at the same time and rather assume that it indeed wants to help the server in preventing the attack but is being greedy on the reward at the same time.}

\subsubsection{Server's algorithm}

\begin{algorithm*}[h!]
\caption{Server's algorithm.}
\label{fig:serverAlgo}
\begin{algorithmic}[1]
	\Procedure{Server-Learning}{$\mathcal{A}, \rho_1, \varepsilon, r_\text{max}, \Gamma_\text{service}, \texttt{fee}, \iota$}
		\State $r \gets 1$
		\State $\texttt{acc}_S(\texttt{TEST}_{0}) \gets 0$
		\State $\texttt{LABEL}_S(\texttt{TEST}_{0}) \gets \emptyset$
		\State $\texttt{REWARD}_{0} \gets 0$ 
		\State $\rho_{S,1} \gets \rho_1$
		\While{$r \leq r_{max}$}
			\State Initialize $\paran{\texttt{TRAIN}_r, \texttt{LABEL}_S\paran{\texttt{TRAIN}_r}}$ and $\paran{\texttt{TEST}_r,  \texttt{LABEL}_S\paran{\texttt{TEST}_r}}$ from $\mathcal{A}$.
			\If{$r \geq 2$}
				\State $\rho_{S,r} \gets \paran{\frac{\sum_{i=1}^{r-1} \texttt{acc}_S\paran{\texttt{TEST}_i}}{2\sum_{i=1}^{r} \texttt{acc}_S\paran{\texttt{TEST}_i}}} \rho_{S,r-1}$
			\EndIf
				\State Send $\paran{\texttt{COMMIT}\paran{\rho_{S,r}}, \texttt{TRAIN}_r, \texttt{LABEL}_S(\texttt{TEST}_{r-1})}$ to the gateway.
				\State Wait for the response from the gateway. \Return \texttt{false} if timed out.
				\State $\rho_r \gets \min \paran{\rho_{S,r}, \rho_{G,r}}$
				\State Send $\paran{\rho_r, \texttt{OPEN}\paran{\rho_{S,r}}, \texttt{TEST}_r}$ to the gateway. 
				\State Wait for the response from the gateway. \Return \texttt{false} if timed out.
			\State $\texttt{acc}_S(\texttt{TEST}_{r}) \gets \ \mid \{ i \mid \texttt{LABEL}_S(\texttt{TEST}_{r}) = \texttt{LABEL}_G(\texttt{TEST}_{r}) \} \mid$
			\State $\texttt{REWARD}_r \gets \rho_r \sum_{i=1}^r \texttt{acc}_S(\texttt{TEST}_{i})$
			\If{$\texttt{acc}_S(\texttt{TEST}_{r}) \geq (1-\varepsilon)|\texttt{TEST}_{r}|$}
				\State Send $\texttt{OPEN}(\varepsilon)$ and initiate payment of $\texttt{REWARD}_r$ to the gateway.
				\If{payment was successful}\label{server:payment}
					\State \Return \Call{Server-Deployment}{$G, \Gamma_\text{service}, \texttt{fee}, \iota$}
				\EndIf
			\Else
				\State $r \gets r+1$
			\EndIf
		\EndWhile
		\State \Return \texttt{false}
	\EndProcedure
\end{algorithmic}
\end{algorithm*}

The server's algorithm is presented in detail in Algorithm~\ref{fig:serverAlgo}. It begins with initializing some boundary conditions: \textit{a priori} accuracy $\texttt{acc}_S\paran{\texttt{TEST}_0}$ on the test set $\texttt{TEST}_0$ to zero; the set of labels $\texttt{LABEL}_S\paran{\texttt{TEST}_0}$ on this test set to the empty set; and the cumulative reward so far, $\texttt{REWARD}_0$, to zero as well. The rounds then begin to progress and at the beginning of each round, the server makes sure that the maximum round index has not been exceeded.

In each round, the server begins by arranging the attack packets along with some good packets into two sets : one that it wishes to send to the gateway for training the model, and the other for testing. We typically assume that the training as well as the testing set is obtained by independent sampling without replacement from the set of all attack and good packets. This can even be done in a way that the training (testing) sets "appear" to be chosen uniformly at random to the gateway by local coin flipping at the server's end (put an attack packet if heads and good packet if tails). Maintaining the overall quality of the function constructed by the learning algorithm (expected loss, with respect to the underlying distribution)~\cite{dekel2010incentive} makes this construction necessary.

After the selection of training and test packets, the server computes the reward $\rho_{S,r}$ it wishes to pay to the gateway per correctly classified example in the test set for this round. The reward is chosen based on a bidding mechanism, where both the server and the gateway provide their bids, $\rho_{S,r}$ for the server and $\rho_{G,r}$ for the gateway, for the reward and the smaller of the two is chosen as the reward $\rho_r$ for this round. This is commensurate with the bilateral trading scheme as discussed during the initial reward selection phase and provides a DSIC strategy for the server and the gateway to act honestly during each round of the protocol. A commitment scheme is used here for the gateway and server to be able to carry out this blind negotiation and also verify the computation of $\rho_r$.

The server also sends the training set $\texttt{TRAIN}_r$ and the labels $\texttt{LABEL}_S\paran{\texttt{TEST}_{r-1}}$ for the test set in the previous round. The latter is to assist the gateway in further training the model on more labeled examples with the hope that this will increase the model's accuracy. Once the classification $\texttt{LABEL}_G\paran{\texttt{TEST}_r}$ on the tests labels is obtained from the gateway, the server computes the accuracy $\texttt{acc}_S\paran{\texttt{TEST}_r}$ by comparing them against the true labels $\texttt{LABEL}_S\paran{\texttt{TEST}_r}$. If the accuracy was more than the server's threshold $\varepsilon$, then the server open its commitment for $\varepsilon$ and initiates the payment protocol to pay the reward to the gateway. The total reward is calculated by multiplying the per-example reward in the last round of the protocol with the total number of correctly classified test examples so far. If this payment is successful, the server assumes the beginning of the service period as negotiated previously. The details of what happens during this service period will be explained in later subsections.

If, however, the accuracy in the current round is unsatisfactory, the server proceeds to the next round, where it starts by adjusting the reward. The server's bid for the reward changes based on the accuracy reported by the gateway in the preceding rounds. However, to prevent the gateway from increasing the number of rounds in order to collect more reward, the server reduces its bid by half for the reward in every round according to the following equation: \[ \rho_{S,r} = \paran{\frac{\sum_{i=1}^{r-1} \texttt{acc}_S\paran{\texttt{TEST}_i}}{2\sum_{i=1}^{r} \texttt{acc}_S\paran{\texttt{TEST}_i}}} \rho_{S,r-1} \] Here, the $2$ in the denominator can be replaced by any constant $c > 1$. Rearranging the terms of this equation gives $\texttt{REWARD}_r = \texttt{REWARD}_{r-1}/2$, implying that the total reward that the gateway collects gets reduced to half every time it is unable to learn the model to sufficient accuracy. To prevent the server from underpaying an honest gateway, we assume that $v_S$ was chosen keeping in mind this reduction in the total reward.

 The gateway uses a similar scheme to update its bid for the reward and reports it to the server. The two bids must match in every round for both the server and the gateway.  

\subsubsection{Gateway's algorithm}

\begin{algorithm*}[h!]
\caption{Gateway's algorithm.}
\label{fig:gatewayAlgo}
\begin{algorithmic}[1]
	\Procedure{Gateway-Learning}{$\rho_1, \Gamma_\text{service}, \texttt{fee}, \iota$}
		\State $r \gets 1$
		\State $\texttt{acc}_G(\texttt{TEST}_{0}) = 0$
		\State $\texttt{MODEL}_{0} \gets \emptyset$
		\State $\rho_{G,1} \gets \rho_1$
		\While{\texttt{true}}
			\If{$r \geq 2$}
				\State $\texttt{acc}_G(\texttt{TEST}_{r-1}) \gets \ \mid \{ i \mid \texttt{LABEL}_S(\texttt{TEST}_{r-1}) = \texttt{LABEL}_G(\texttt{TEST}_{r-1}) \} \mid$
				\State $\rho_{G,r} \gets \paran{\frac{\sum_{i=1}^{r-1} \texttt{acc}_G\paran{\texttt{TEST}_i}]}{2\sum_{i=1}^{r} \texttt{acc}_G\paran{\texttt{TEST}_i}}} \rho_{G,r-1}$
			\EndIf
			\State Send $\rho_{G,r}$ to the server.
			\State Wait for the response from the server. \Return \texttt{false} if timed out.
			\State Verify the commitment for $\rho_{S,r}$ and the computation of $\rho_r$. \Return \texttt{false} if not.
			\State $\texttt{MODEL}_r \gets \texttt{MODEL}_{r-1}$ trained on $\texttt{TRAIN}_r$ and $\paran{\texttt{TEST}_{r-1}, \texttt{LABEL}_{S}\paran{\texttt{TEST}_{r-1}}}$.
			\State $\texttt{LABEL}_G \paran{\texttt{TEST}_r} \gets \texttt{MODEL}_r \paran{\texttt{TEST}_r}$
			\State Send $\texttt{LABEL}_G$ to the server. 
			\State Wait for the response from the server. \Return \texttt{false} if timed out.
			\If{payment initiated by the server}
				\If{$\texttt{OPEN}(\varepsilon)$ fails to verify}
					\State \Return \texttt{false}
				\Else
					\For{$r' = 1,\dots,r$} \Comment{Check if server did not lie on the accuracies.}
						\If {$\texttt{LABEL}_S \paran{\texttt{TEST}_{r'}}$ differs from $\texttt{MODEL}_r \paran{\texttt{TEST}_{r'}}$ by more than $\varepsilon$}
							\State \Return \texttt{false}
						\EndIf
					\EndFor
				\EndIf
				\If{$\texttt{REWARD}_r$ received from the server}\label{server:payment}
					\State Deploy $\texttt{MODEL}_r$ in firewall.
					\State \Return \Call{Gateway-Deployment}{$S, \Gamma_\text{service}, \texttt{fee}, \iota$}. \Comment{$S$ refers to the server's identity.}
				\EndIf
			\Else 
				\State $r \gets r+1$
				\State Wait for the response from the server. \Return \texttt{false} if timed out.
			\EndIf
		\EndWhile
		\State \Return \texttt{false}
	\EndProcedure
\end{algorithmic}
\end{algorithm*}    

We now discuss the details of Algorithm~\ref{fig:gatewayAlgo} to describe the steps that the gateway performs during the learning phase. After initializing the protocol parameters similar to what the server did, the gateway produces its bid $\rho_{G,r}$ to the server. Note that this bid is the same as the initial reward chosen previously. Upon receiving the server's response, the gateway verifies if the commitment for $\rho_{S,r}$ was correctly made and the server correctly computed $\rho_r$. If the test passes, the gateway (re)-trains the model $\texttt{MODEL}_r$ based on the training examples from the current round and the test examples from the previous round (along with their labels received in this round). It then computes the classification $\texttt{LABEL}_G\paran{\texttt{TEST}_r}$ using $\texttt{MODEL}_r$ and sends it to the server. 

If the server replies with the commitment-open for $\varepsilon$, the gateway verifies if $\texttt{REWARD}_r$ is commensurate with the classifications $\texttt{MODEL}_r$ would have provided for all the previous rounds. In other words, since the server is now trying to minimize the reward it gives to the gateway, lying on the accuracy of the test examples seems advantageous to the server since that will allow the server to lie on the reward accumulated. This way, once the final round happens where the server agrees on the model to be sufficiently accurate, the gateway can run all the previous test examples against the model and check if the labels provided by the server were consistent with what this model now reports (within $\varepsilon$ parameter, which the server is now required to commit at the beginning of the protocol and open in the last round, just before payment happens). If the gateway finds any inconsistencies, it immediately recognizes treachery on the part of the server and aborts the protocol. Furthermore, if the server lies consistently, then the gateway may not be able to learn the model at all and the server would have wasted its time and not get the attack packets stopped. Even if the gateway learnt the model somehow, it is likely to have overfitted the model, which makes it likely that some false positives are blocked after deployment or some false negatives cause attack packets to reach the server. This is detrimental to the server since the attack will not be prevented in this case (as no model will be deployed in the firewall). Thus, it is not in the server's best interest to lie on the accuracies reported, and hence, the reward accumulated, to the gateway.

Once the gateway receives the reward, it deploys the model $\texttt{MODEL}_r$ in its firewall and proceeds to the service period. However, if the server replies with more training and test examples, the gateway updates its bid for the reward based on the labels provided by the server. It then retrains it model and provides classifications on the new test examples. The rounds progress until the server is satisfied with the gateway's model.

An interesting way for the gateway to cheat during the bidding phase for $\rho_r$ is to carefully manipulate the classifications $\texttt{LABEL}_G$ on the test sets so that the total reward collected is more than what it would have been if the gateway was playing honestly. The gateway can try to achieve this by temporarily reporting the classifications so that the accuracy calculated is low. The motivation behind this is to force the server into more and more rounds, thus collecting a high total reward than a smaller reward in the fewer rounds. However, it still has to do it in way that the server does not exhaust its maximum round limit $r_\text{max}$. 

We show that it is not possible for the gateway to obtain a reward higher than what it would have collected upon honest play. In fact, a stronger statement can be said here about the gateway's strategy. Not only does following the honest strategy maximize the overall reward the gateway can collect from the server, but this holds true in each round as well. The following lemma establishes the proof.

\begin{lemma}
	Let $r^*$ be the number of rounds that the gateway takes to train the model if it plays honestly. Then, the maximum reward the gateway can collect from the server is at most $\texttt{REWARD}_{r^*}$.
\end{lemma}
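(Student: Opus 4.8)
The plan is to collapse the round-by-round reward recursion into a closed form, observe that only two strategy-dependent quantities survive in it, and then show the gateway can push neither of them beyond the value it gets by playing honestly; strict monotonicity of the closed form then also yields the stronger ``per round'' claim quoted just before the lemma.

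\emph{Step 1 (closed form for $\texttt{REWARD}_r$).} Write $S_r=\sum_{i=1}^r \texttt{acc}_S(\texttt{TEST}_i)$ and recall that the two bids coincide, $\rho_r=\rho_{S,r}=\rho_{G,r}$, with $\rho_1$ fixed by the negotiation phase and $\rho_r=\frac{S_{r-1}}{2S_r}\rho_{r-1}$ for $r\ge 2$. Then $\texttt{REWARD}_r=\rho_r S_r=\tfrac12\rho_{r-1}S_{r-1}=\tfrac12\texttt{REWARD}_{r-1}$, so by induction
\[
\texttt{REWARD}_r=\frac{\texttt{REWARD}_1}{2^{\,r-1}}=\frac{\rho_1\cdot\texttt{acc}_S(\texttt{TEST}_1)}{2^{\,r-1}}.
\]
The key consequence is that every intermediate accuracy $\texttt{acc}_S(\texttt{TEST}_2),\texttt{acc}_S(\texttt{TEST}_3),\dots$ cancels: the eventual payout depends only on the negotiated price $\rho_1$, on the accuracy reported in the \emph{first} round, and on the round $r$ at which the protocol terminates. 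Hence the gateway's optimization reduces to maximizing $\texttt{acc}_S(\texttt{TEST}_1)/2^{\,r-1}$.

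\emph{Step 2 (bounding the two levers).} The server draws and ships $\texttt{TRAIN}_r,\texttt{TEST}_r$ together with their true labels in every round, independently of what the gateway has reported, so the best-effort model available to the gateway after $r$ rounds — and in particular the number $a_r$ of labels of the fresh set $\texttt{TEST}_r$ it can classify correctly — is exactly the honest-play object; and since the gateway has no information about the true labels of $\texttt{TEST}_r$ beyond what its model encodes, any labelling it reports scores at most $a_r$ against the server's check. Thus (i) $\texttt{acc}_S(\texttt{TEST}_1)\le a_1$, with equality iff round~1 is reported honestly; and (ii) for every $r<r^*$ we have $\texttt{acc}_S(\texttt{TEST}_r)\le a_r<(1-\varepsilon)|\texttt{TEST}_r|$, so the termination test necessarily fails, i.e.\ every strategy terminates at some round $r\ge r^*$ (or never, in which case it collects $0\le\texttt{REWARD}_{r^*}$ because of the cap $r_{\max}$). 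Substituting into the closed form, any strategy collects at most $\rho_1 a_1/2^{\,r^*-1}=\texttt{REWARD}_{r^*}$, which is exactly the honest payout. Finally, since $\texttt{REWARD}_r$ is strictly decreasing in $r$ (when $\texttt{acc}_S(\texttt{TEST}_1)>0$; otherwise all rewards are zero and the bound is trivial), the same computation gives the per-round strengthening: for each fixed $r$, the cumulative reward obtainable if termination happens in round $r$ is maximized — and uniquely so, by honest reporting in round~1 and honestly up to round $r$.

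\emph{Main obstacle.} The only step that is not pure algebra is the claim in Step~2 that relabelling a fresh test set cannot beat the model the gateway actually built; making it airtight relies on the modelling assumptions that $\texttt{TEST}_r$ is sampled afresh each round and the gateway has no side channel on its labels, so ``guessing better than one's own classifier'' is impossible in the relevant worst-case (and in-expectation) sense. One should also check explicitly that \emph{stalling} deviations are covered — report round~1 honestly and then deliberately degrade later reports to inflate the round count: such a strategy fixes $\texttt{acc}_S(\texttt{TEST}_1)=a_1$ but strictly enlarges the termination round, hence strictly lowers the reward by the closed form. A minor point worth a sentence is that the gateway cannot profit by desynchronizing $\rho_{G,r}$ from $\rho_{S,r}$, since the server verifies the computation of $\rho_r=\min(\rho_{S,r},\rho_{G,r})$ and aborts (reward $0$) on any mismatch.
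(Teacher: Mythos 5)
Your proof is correct and follows essentially the same route as the paper's: both rest on the telescoping identity $\texttt{REWARD}_r = \texttt{REWARD}_{r-1}/2$ together with the observation that no strategy can terminate before round $r^*$. If anything, yours is slightly more careful --- by unrolling the recursion all the way to $\texttt{REWARD}_1 = \rho_1\,\texttt{acc}_S(\texttt{TEST}_1)$ you make explicit that the payout depends only on the first-round reported accuracy and the termination round, and you then bound that accuracy by its honest value $a_1$; the paper's proof stops the unrolling at round $r^*$ and silently identifies the resulting $\texttt{REWARD}_{r^*}$ of the deviating run with the honest payout, which is precisely the small gap your Step~2(i) closes (under the same modelling assumption, shared with the paper, that the gateway cannot label a fresh test set better than its own best-effort model).
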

\begin{proof}
	Let $r$ be the actual number of rounds taken by the gateway to perform the training. Clearly, $r \geq r^*$, since otherwise, the honest learning would finish in $r$ rounds and not $r^*$. Let $r = r^* + \zeta$ for some $\zeta \geq 0$. Then, we have the following.
	\begin{equation*}
		\begin{split}
			\texttt{REWARD}_r &\leq \rho_{S,r} \sum_{i=1}^r \texttt{acc}_S(\texttt{TEST}_{i})\\
			&= \paran{\frac{\sum_{i=1}^{r-1} \texttt{acc}_S\paran{\texttt{TEST}_i}}{2\sum_{i=1}^{r} \texttt{acc}_S\paran{\texttt{TEST}_i}}} \rho_{S,r-1} \sum_{i=1}^r \texttt{acc}_S(\texttt{TEST}_{i}) \\
			&= \frac{1}{2} \rho_{S,r-1} \sum_{i=1}^{r-1} \texttt{acc}_S\paran{\texttt{TEST}_i} \\
			&= \frac{1}{2^\zeta} \rho_{S,r^*} \sum_{i=1}^{r^*} \texttt{acc}_S\paran{\texttt{TEST}_i} = \frac{1}{2^\zeta} \texttt{REWARD}_{r^*} \\
			&\leq \texttt{REWARD}_{r^*}
		\end{split}
	\end{equation*} 
	Hence, the reward up to round $r$ can never exceed the reward collected by optimal play.
\end{proof}

This lemma proves that the dominant strategy for the gateway is to play honestly in each round during the learning phase. Combining this with our discussion of the initial reward selection and how the logs maintained by the gateway and the commitment schemes force the server to act honest, we have shown that the \remoteGate protocol is incentive compatible. Note that although we do not explicitly mention this, but the server's utility function contains an additive term which represents the need for the attack to be prevented. This dis-incentivizes the server to take any action that will make the gateway abort the protocol and not deploy the model in its firewall, since then the attack will not be prevented. We assume that the amount of total reward and fee paid by the server is small compared to the what the server gains when the attack is successfully prevented.

\subsection{Payment Protocol}
Once the learning phase is over, the server pays the gateway its promised reward. Also, during the service phase, the server periodically pays the gateway a \texttt{fee} in appraisal of the retention of the model in the firewall. In this section, we abstract the underlying payment scheme used for all these payments. The exact details are out of the scope for this paper, although, specialized payment schemes for this protocol pose interesting open problems for future work.

\begin{figure*}
\begin{center}
	\includegraphics[height = 2.5cm, width = 11cm]{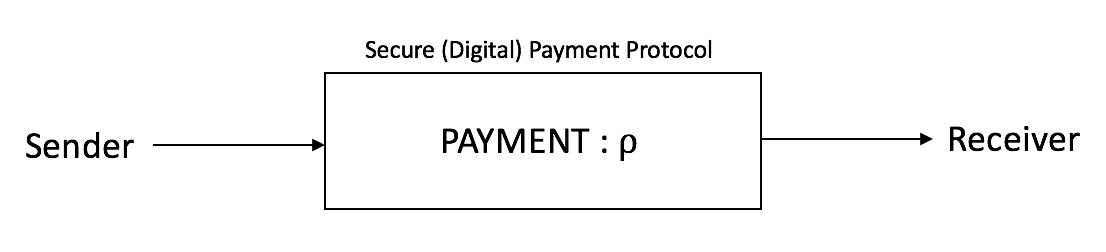}
	\end{center}
	\label{fig:payment}
	\caption{Secure digital payment protocol blackbox. This is used to send $\rho$ units of money from the Sender to the Receiver.}
\end{figure*}

Our algorithm is flexible in the use of any trusted digital payment protocol, as long as it is secure and provable. By \emph{trusted}, we mean that when the server is paying the gateway its reward, the payment protocol must ensure that the money goes from the server's account to an account that belongs to the gateway and no one else. Moreover, we want this protocol to be \emph{secure} so that any data that is passed between the server/gateway to the merchant helping with the payment must be protected against tampering by malicious third parties. We also require that the payment protocol be \emph{provable}, which means that at any point after the protocol terminates, both the server and the gateway must be able to provide a proof of the success or failure of the payment along with the amount of the reward in question. If an invalid or false proof is provided, then the payment protocol must be able to identify this and report accordingly. Some examples of payment protocols that can be used here are Visa Checkout~\cite{visa}, PayPal~\cite{paypal}, Google Wallet~\cite{google}, Apple Pay~\cite{apple} etc.

\subsection{Deployment Verification}
Once the gateway has convinced the server of an accuracy that is at least $\varepsilon$, the server initiates the payment protocol (line~\ref{server:payment} of Algorithm~\ref{fig:serverAlgo}). The important thing here is for the gateway to now deliver its promise of deploying the model by actually installing it in the firewall and making sure no future attack packets reach the server. However, since the gateway is rational, it needs to be incentivized to do this. As mentioned previously, the incentive is in the form of a periodic \texttt{fee} that the server pays to the gateway to maintain the model in its firewall. The details of the deployment verification phase are provided in Algorithms~\ref{alg:serverDeployment} and~\ref{alg:gatewayDeployment}. 

\begin{algorithm*}[h!]
\caption{Server's protocol for deployment.}
\label{alg:serverDeployment}
\begin{algorithmic}[1]
	\Procedure{Server-Deployment}{$G, \Gamma_\text{service}, \texttt{fee}, \iota$}
	\For{$i = 1,\dots,\iota$} 	
		\State Observe traffic for $(\Gamma_\text{service}/\iota)$ time steps.
		\If{attack packets $A_1,A_2,\dots$ received with average interval $\Delta' < \Delta / \varepsilon$}
			\State $\mathcal{A} \gets \{ A_1,A_2,\dots \}$
			\State $\texttt{GList} \gets \Call{GatewayDiscovery}{\mathcal{A}}$
			\If{$G \in \texttt{GList}$}
				\If{$\Call{Server-SpoofCheck}{G,\mathcal{A},\Delta'} = \texttt{false}$}
					\State $\texttt{LOG}_S \gets $ "\texttt{BREACH}"
					\State \Return \texttt{false}.
				\EndIf
					\State Monitor $\mathcal{A}$ for taking preventive measures in the future, if required.
				\EndIf
			\EndIf
		\State Pay \texttt{fee} to gateway $G$.
	\EndFor
	\State \Return \texttt{true}
	\EndProcedure
\end{algorithmic}
\end{algorithm*}

In Algorithm~\ref{alg:serverDeployment}, the server begins the deployment verification phase, also referred to as \emph{service}, by observing the traffic for certain time steps and accordingly paying the gateway for that time period. More concretely, since the total service duration agreed upon is $\Gamma_\text{service}$ and the number of installments in which the payment will be made is $\iota$, the server observes the traffic in time slots that are $\Gamma_\text{service} / \iota$ time steps long. In each slot, the server checks if an attack packet was received or not. Since the server allowed an error tolerance of $\varepsilon$ for the gateway's model, it now expects the attack frequency smaller by a factor of $\varepsilon$. Thus, if the server receives attack packets that are at least $\Delta/\varepsilon$ time steps apart\footnote{recall that $\Delta$ was the original inter-packet time in $\mathcal{A}$}, then it is convinced that the model is correctly installed in the gateway's firewall and hence, it pays the gateway the agreed upon \texttt{fee}. 

(\textbf{Note 1}: The discussion above provides a way for the server to decide what value of $\varepsilon$ to set. Ideally, the server would like $\Delta/\varepsilon > \Gamma_\text{service}$. Since the only value known to the server at the beginning of the algorithm is $\Gamma_\text{service}^S$, the server can set $\varepsilon$ to any value less than $\Delta / \Gamma_\text{service}^S$ as a good estimate. Another way would be to modify Algorithm~\ref{fig:serverInit} slightly so that the commitment for $\varepsilon$ goes after $\Gamma_\text{service}$ is decided. This way, the server can set $\varepsilon$ to any value less than $\Delta / \Gamma_\text{service}$ instead and be sure that during the service period, no\footnote{at most one to be precise.} attack packets must be received.) 

If, however, the server receives attack packets that arrive more frequently, the server knows that either the attacker is spoofing the gateway's address in the new attack packets or the gateway never deployed the (or deployed a different) model in its firewall. To find what case it is, the server takes similar steps as it took in Algorithm~\ref{alg:remoteGateServer}. It begins with running the gateway discovery protocol in Algorithm~\ref{alg:discovery} to determine the list of gateways that are closer to the attack source. If the gateway $G$ with whom the service is going on is not part of this list, then the server knows that the attacker has changed its source of attack, but $G$ is fulfilling its promised protection. Hence, it pays $G$ the \texttt{fee} as promised.\footnote{In this case, ideally, the server should should take more pressing measures to stop the attack.} If $G$ was indeed part of the list, the server runs the spoof-check protocol in Algorithm~\ref{alg:serverSpoof} with $G$ to ensure that its address was not spoofed. If it was spoofed, the server pays the gateway its promised \texttt{fee}, else, it identifies that the gateway is acting malicious by not fulfilling its promise and hence, immediately terminates any future payments to the gateway. In addition, the server initiates conflict resolution with the trusted third party to claim its reward and the already paid fee back.

(\textbf{Note 2}: Similar to the discussion in note $1$, Algorithm~\ref{alg:serverDeployment} also suggests a way for the server to decide what $\iota$ to agree upon. Ideally, the $\iota$ suggested by $G$ in Algorithm~\ref{fig:serverInit} is such that $\Gamma_\text{service}/\iota$ is more than the time it took for the server to run the spoof-check with $G$, say $\tau$, so that during this deployment verification phase, the server can safely run the spoof check within one installment slot. Thus, the server accepts $\iota$ only when it is less than $\Gamma_\text{service}/\tau$. Combine this with note $1$ above to get $\iota = O(1/\varepsilon)$. Moreover, since the gateway also has a good estimate of what $\tau$ is, it is in its best interest to suggest $\iota$ accordingly or else it risks the server terminating the protocol and switching to a different gateway.) 

\begin{algorithm*}[h!]
\caption{Gateway's protocol for deployment.}
\label{alg:gatewayDeployment}
\begin{algorithmic}[1]
	\Procedure{Gateway-Deployment}{$S, \Gamma_\text{service}, \texttt{fee}, \iota$}
	\For{$i = 1,\dots,\iota$} 	
		\If{$(\Gamma_\text{service}/\iota)$ time steps have passed}
			\If{\texttt{fee} not received}
				\State $\texttt{LOG}_G \gets $ "\texttt{BREACH}"
				\State \Return \texttt{false}.
			\EndIf
			\EndIf
	\EndFor
	\State \Return \texttt{true}
	\EndProcedure
\end{algorithmic}
\end{algorithm*}

Having discussed the server's view of verifying deployment, the gateway's protocol in this case is described in Algorithm~\ref{alg:gatewayDeployment}. In each time slot of $\Gamma_\text{service} / \iota$ that passes, the gateway either receives a payment of $\texttt{fee}$ from the server (at the end of the slot) or not. Everything is good if it does, but if the server does not pay the fee for any reason (attack received or other reason), the gateway construes this as breach of contract from the server and resorts to the trusted third party for conflict resolution to demand its money from the server.

We claim that this mechanism provides both the server and the gateway the required incentive to participate, engage and honestly act for the entire duration of the \remoteGate protocol. Specifically, the gateway now has no incentive to not deploy the model, or deploy the wrong model, or discontinue the deployment after some time (which may even be immediately after the payment). 

\subsection{Conflict Resolution}

\begin{figure*}[h!]
\begin{center}
	\includegraphics[height = 2.5cm, width = 10cm]{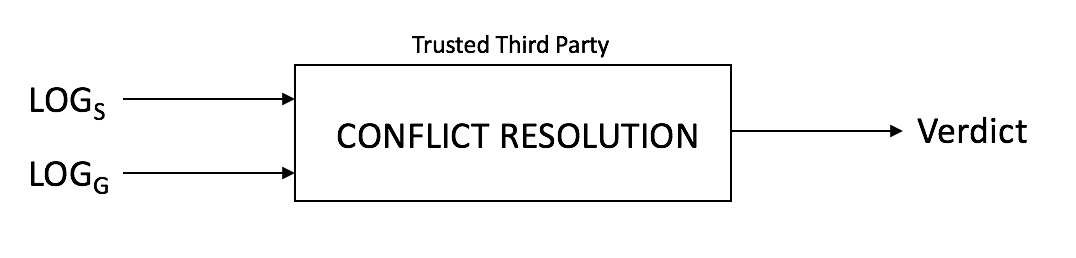}
	\end{center}
	\label{fig:conflict}
	\caption{Schematic of the conflict resolution blackbox, where a trusted third party uses $\texttt{LOG}_S$ and $\texttt{LOG}_G$ to determine a verdict on who acted malicious and take an appropriate action.}
\end{figure*}

The final piece of our protocol is the conflict resolution phase in which a trusted third party is approached by either the server or the gateway to resolve the breach of contract as viewed from either or both parties' perspective(s). The logs of the communication maintained by both the server and the gateway will help this third party to determine who caused the breach and take appropriate actions. The schematic in Fig.~\ref{fig:conflict} depicts this resolution process. For the purpose of our algorithm, the exact details of this conflict resolution are not important and hence, we omit this discussion here. However, we do emphasize that the existence of a trusted third party is only one of the many ways to resolve conflicts. We leave it as an interesting open problem to design an efficient resolution mechanism that does not involve any third party and is scalable as well as trusted.

%\section{Evaluation}
%\label{sec:evaluation}
%\todo[inline]{Mention details of the experiment in mind}
%\todo[inline]{Resource Costs}
%\todo[inline]{Number of rounds, Message cost?}
%\input{empirical}

\section{Discussion}
\label{sec:discussion}
Our algorithm, as described above, works under certain critical assumptions, some of which are described in detail below, in addition to the limitations discussed in Section~\ref{sec:limitations}. These assumptions are from the point of view of the attacker itself than the attack packets per se. We discuss them here to emphasize on the interesting open problems that these assumptions give rise to. For other areas of improvement, the reader is encouraged to refer to our discussion of future work in the next section and the answers to frequently encountered questions in the Appendix.
\paragraph{DoS from the Server during learning} The gateway can itself be subject to a denial of service attack from the server who is honestly following the protocol if the latter keeps sending more and more examples for training (on fake or false attack packets). Setting $r_{max}$ too high and $\varepsilon$ too low will be able to establish this without any regard to the reward (since the server can always terminate the protocol anytime it wishes). Currently, our protocol assumes that the server only runs this protocol when it \emph{is truly} under an attack and that it wants to stop the attack packets as soon as possible. In other words, we model the server as rational with respect to attack prevention and not malicious. Dealing with a malicious server/gateway is yet another interesting direction to explore.
\paragraph{Same Gateway Multiple Attacks}
What if another attack from the same gateway comes? Should the server run the protocol again, only this time with different attack packets? If yes, how many times should this be allowed? One solution is to adopt a strategy that if a gateway seems to be a source of too many attacks, then the server blocks all communication from that gateway. For example, the server gives only, say $3$, chances to the gateway before it stops receiving any packets from it at all. This seems to be a practical solution, but if the gateway is honest, the attacker can deliberately make the gateway fall victim to server's blacklisting. We believe that our spoof checking mechanism from Section~\ref{sec:remoteGate} will be able to prevent such a scenario if we give only a limited number of chances to the gateway before blacklisting it. However, there is a big room for improvement here to handle the different ways the attacker can act in this case.
\paragraph{Multiple Gateways Same Attack} Contrary to the attack above, yet another strategy for the attacker is to launch the same attack but from multiple sources, so that the server runs our protocol with multiple gateways and ends up paying more than what it would to a single gateway. One way to deal with this is for the server to locate a gateway that collects packets from these different \emph{source} gateways and run the protocol with it. Currently, this problem seems to be non-trivial if the gateway to be found is required to be different from the server itself. More specifically, if we view the incoming connections to the server as a graph, where the nodes are all nodes that can route packets to the server (or to other nodes who can), then this problem will involve finding the farthest node from the server that receives packets from each of these sources of attacks. The challenge is to do with without having any \emph{a priori} access to this graph. We believe that this is in itself an interesting open problem. 

\section{Future Work}
\label{sec:future}
In this section, we highlight some interesting open problems (in addition to the ones mentioned in the previous section) that can further improve \remoteGate and provide Internet scale security in the future. This is also in addition to what was previously mentioned in Section~\ref{sec:limitations}.

\paragraph{Limited attack packets}\label{limPackets} We assume that the number of attack packets and good packets with the server are enough so that the model learnt by the gateway is sufficiently accurate (wrt to the value of $\varepsilon$ fixed by the server). Further, we assume that the choice of the server in deciding training and test examples along with the order in which these will be sent to the gateway is supportive of the learning in the manner described above. We make these  assumptions because the server has no good estimate of how complex the learning problem here is and how many rounds it will need to achieve its tolerance setting for $\varepsilon$. We expect that the results in statistical learning theory may provide useful insights in trying to optimize the decisions here.

\paragraph{Bound on gateway's computation}  As mentioned before, we assume that the gateway can only perform polynomial (in the number of example packets sent by the server) number of computation steps in every round. However, similar to above, this raises the question of how one can guarantee that the gateway will be able to learn a model in at most $r_{max}$ rounds, given the examples from the server. Again, we seek the expertise of the statistical learning theorists in providing good estimates for the values of $r_{max}$ and $\varepsilon$ for which this assumption holds with high probability.  

\paragraph{Single source attack}  Our protocol, in its current form, allows the server to pay a gateway that provides this attack-packet filtering service. However, one may question the feasibility of this solution in case multiple sources exist for the same type of attack, each being behind different gateways. This attack may be coordinated or coincidental. In any case, if we allow the server the server to pay each gateway separately for this attack by running the protocol individually with each gateway, the server ends up paying a very high price for curbing the attack. At this point, we require the server to make a decision of whether it is more cost-effective to change its own firewall settings or engage into multiple remote services. For now, we assume that the server takes this decision wisely and leave it for future research to handle this problem more efficiently.

\paragraph{Integration with Distributed Ledger Technologies} One way to remove the assumption of a trusted third party for conflict resolution is to leverage the strong security guarantees that blockchains provide. The current smart contract model that Ethereum and other ledger technologies provide form an ideal candidate for this. A \remoteGate smart contract which handles the initial reward negotiation, reward aggregation over the learning phase and provides an inbuilt escrow service for the entire duration of attack prevention seems to be a promising solution towards a completely decentralized attack prevention that also takes advantage of the tremendous research that is trying to make cryptocurrencies more secure and scalable. We envision an implementation of such a smart-contract based global SDN in our future work.

\section{Conclusion}
\label{sec:conclusion}
In this paper, we introduced the high level idea behind global software defined networking to help prevent attacks closer to their point of origin as opposed to the conventional approach of installing firewalls and antivirus at the victim's end. We present a candidate algorithm for this, which we call \remoteGate, through which we envision enabling a server under attack to help configure the firewall of a remote gateway that was suspected to be the source of the attack packets. We designed \remoteGate to be an incentive-compatible protocol in which the server interactively helps the gateway to learn a model that, when deployed in the gateway's firewall, will filter out the attack packets and prevent them from reaching the server. We also highlighted some challenges and assumptions of our work and provided ideas for future research in this direction.

%\nocite{*}
\printbibliography

\appendix
\label{sec:appendix}
\newcommand{\question}[2]{\item[\textbf{Q#1.}]  \textbf{#2}}
\newcommand{\answer}[1]{\item[\textit{Ans.}]  #1}

\section{Frequently Asked Questions}
\subsection{Parameter setting}
\label{app:paramter}
\begin{enumerate}
	\question{1}{How are $v_S$ and $v_G$ selected for \remoteGate?}
	\answer{We assume human assistance in setting these parameters for now. In the longer vision of an automated system, these values can be set to appropriate functions of the attack detection mechanisms deployed at the server's end and the resource-usage monitoring systems deployed at the gateway's end.}
	
	\question{2}{How can the server make sure the gateway will be able to learn a model within $\epsilon$ accuracy within $r_\text{max}$ rounds?}
	\answer{Technically speaking, it cannot. The best the server can do it hope for such a learning to happen. However, one way to handle this is to be flexible in the choice of $\epsilon$. Once the server has completed some number of rounds with the gateway, the accuracy at the end of these rounds dictates how much fraction of the attack packets will be filtered if the model was deployed as it is. If the server determines that it is not in the best interest to continue any further given the nature of the attack, it can preempt the learning phase and ask for deployment. Ofcourse, taking such a decision in an automated manner may be challenging, which we leave for future work to handle.}
	
	\question{3}{How to set $\Gamma_\text{service}^S$ and $\Gamma_\text{service}^G$?}
	\answer{The service period sought by the server depends on the type of the attack. An ideal decision would be to stop the attack for long enough period for the server to be able to take a more affirmative action during the time (like giving an external investigation enough time to identify the real source of the attack and take action). It can also depend on the amount of funds available at the server along with some human provided parameters. For the gateway, this period can be a function of the amount of computational resources it thinks the filtering will incur along with the existing service periods with other servers on the Internet. In either case, optimizing the service time based on the type of attack is a topic of further investigation and research.}
\end{enumerate}

\subsection{Other discussion}
\label{app:technical}

\begin{enumerate}
	\question{4}{Where does the server get the money to pay the gateway?}
	\answer{This is similar to how payments are currently proposed to happen through IoT devices. The user can register a credit card or some digital payment mechanism securely on the server, which it uses to issue rewards to the gateways it interacts with. Although the exact implementation of such a system is beyond the scope of the paper, the server will be able to access the funds of the user operating it through any secure currency interchange (including the modern cryptocurrencies).}
	
	\question{5}{How can the gateway determine if the attack is invalid (not a real attack)?}
	\answer{Philosophically, there is no such way apart from matching the attack packets to some previously well known types of attacks. However, \remoteGate is more general in the sense that any packet that the server wishes not to receive can be labelled as \emph{attack}. Thus, our use of the word attack here is in a much broader sense, which in a way disables the gateway to question the authenticity of the labelling provided by the server.}
	
	\question{6}{Can we protect against a server who fakes attack packets (e.g. prevent access to , say Google, to someone)?}
	\answer{The current approach for \remoteGate is to only entertain requests from the server that block traffic to itself and not some other destination. Of course, in future this will be extended to general attack prevention, when this problem must be carefully looked into before the learning begins at the gateway's end. However, in a third possibility where someone else spoofs the servers IP address to block the traffic, then gateway's response will go to the servers IP address and not the spoofer. This is because the protocol is interactive. Hence, the spoofer will not be able to continue \remoteGate to the end.}
	
	\question{7}{Can \remoteGate deal with man-in-the-middle attack?}
	\answer{A possible man-in-the-middle attack for \remoteGate is when some adversary interacts with the server and the gateway making them believe that they are interacting with each other. One reason for such an intervention might be to obtain the reward the server has to pay the gateway. From the server's perspective, if the money goes to anybody other than the gateway, then the server can appeal to the authenticity of the secure payment channel that we assume exists in this case. Our proposed use of cryptocurrency here can help prevent this problem due to ledger transparency. Moreover, if the middle man fails to stop the attack packets after the payment is made, he is in a way exposing himself to the conflict resolution that the server will resort to in this scenario. Of course, \remoteGate is still in a very nascent stage and clever forms of attacks can be designed to break the system. We leave it to future work to optimize in these cases.}
	
	\question{8}{Does \remoteGate introduce new attacks?}
	\answer{Short answer, yes. More specifically, among many other ways, the attacker now collude with the gateway to get money in ways that will incentivize the gateway to collude with the adversary. However, such a behavior is hard to detect and even completely avoid. A real life analogy may help understand our philosophy here: upon calling the Police during an emergency, we lay faith on the fact the Police forces are not colluding with the attacker and are determined to help us. If we assume collusion by default, then no trust in such forces can be laid. Similarly, our approach is to trust the gateway by default and take appropriate actions when this trust is broken.}
	
	\question{9}{Is it possible for an adversary to tamper with the existing filtering of the gateway through careful design of the attack packets?}
	\answer{The answer to this question depends not on the design of the attack packets but on the classification of these packets as \emph{attacks} by the server. Even if the adversary spends a lot of resources to carefully design the attack packets, it needs to convince the server that these packets are indeed attack packets for which \remoteGate must be launched. An interesting question here is to compare the reward that the adversary collects (assuming \remoteGate is launched) vs the computation resources it spent in designing and transmitting those packets. If the latter is higher, then it becomes interesting to understand the situations under which the adversary is still incentivized to do so. We present this as an interesting future work to perform a resource competitive analysis of simulating this attack on the system.}
\end{enumerate}

\section{List of notations}
Fig.~\ref{fig:notations} provides a detailed list of notations.
\begin{figure*}
\begin{center}
	\begin{tabular}{ r p{1cm} p{12cm} }
		$\mathcal{A}$ && Set of attack packets with the server. \\
		$\Delta$ && Average delay between two attack packets received by the server. \\
		$\varepsilon$ && Server's error tolerance parameter in $(0,1)$ for the model learned by the gateway. \\
		$\texttt{LOG}_S$ && A complete log of server's interactions and internal computations during a run of the \remoteGate protocol. An entry (containing the state change and timestamp) is made in this log every time the server's internal state changes. \\
		$\texttt{LOG}_G$ && A complete log of gateway's interactions and internal computations during a run of the \remoteGate protocol. An entry (containing the state change and timestamp) is made in this log every time the gateway's internal state changes. \\
		$v_S$ && Server's true valuation of how much it should pay the gateway per correctly classified example during the learning phase. \\
		$v_G$ && Gateway's true valuation of how much it should be paid by the server per correctly classified example during the learning phase. \\
		$\rho_{S,r}$ && Server's bid for the reward per correctly classified example during $r^{th}$ round of the learning phase. \\
		$\rho_{G,r}$ && Gateway's bid for the reward per correctly classified example during $r^{th}$ round of the learning phase. \\
		$\rho_r$ && Reward chosen (per correctly classified example) for the $r^{th}$ round of the learning phase. \\
		$\Gamma_\text{service}^S$ && Server's requirement for the period of time it requires the gateway to maintain the model in its firewall for. \\
		$\Gamma_\text{service}^G$ && Gateway's guarantee for the period of time it will maintain the model in its firewall. \\
		$\iota$ && Number of installments in which the total service fee will be paid after deployment. \\
		$\texttt{fee}$ && Fees paid per installment by the server after the model has been deployed. \\
		$\texttt{TRAIN}_r$ && Training set of examples issued by the server for gateway's (supervised) learning in round $r$. \\
		$\texttt{TEST}_r$ && Set of examples issued by the server to test the model learned by the gateway at the end of round $r$.\\
		$\texttt{acc}(\ )$ && The number of test examples that were correctly classified by the gateway using its model, compared to the true labels with the server.\\
		$\texttt{LABEL}_S(\ )$ && The set of true labels for the examples. \\
		$\texttt{LABEL}_G(\ )$ && The set of labels for the examples as obtained by the gateway using its model. \\
		$\texttt{REWARD}_r$ && The total reward collected up to round $r$ that is to be paid to the gateway after the learning phase is complete.\\
		$\texttt{MODEL}_r$ && The (best effort polynomial-time) model learned by the gateway in round $r$ using all the labelled examples it has received from the server so far.\\
	\end{tabular}
	\caption{List of notations}
	\label{fig:notations}
\end{center}
\end{figure*}

\end{document}